\def\P{\mathbb{P}}
\newcommand{\argmin}{\operatornamewithlimits{argmin}}
\def\p{{\mathbb{P}}}
\def\E{{\mathbb{E}}}
\renewcommand*{\p@section}{\S\,}
\renewcommand*{\p@subsection}{\S\,}
\tikzset{
    between/.style args={#1 and #2}{
         at = ($(#1)!0.5!(#2)$)
    }
}
\newtheorem{theorem}{Theorem}
\newtheorem{definition}[theorem]{Definition}
\begin{document}

\title{Contrasting Effects of Replication in Parallel Systems: From Overload to Underload and Back}

\author{Felix~Poloczek,~\IEEEmembership{University of Warwick / TU Berlin}
        and~Florin~Ciucu,~\IEEEmembership{University of Warwick}}

\maketitle
\begin{abstract}
Task replication has recently been advocated as a practical
solution to reduce latencies in parallel systems. In addition to
several convincing empirical studies, some others provide
analytical results, yet under some strong assumptions such as
Poisson arrivals, exponential service times, or independent
service times of the replicas themselves, which may lend
themselves to some contrasting and perhaps contriving behavior.
For instance, under the second assumption, an overloaded system
can be stabilized by a replication factor, but can be sent back in
overload through further replication. In turn, under the third
assumption, strictly larger stability regions of replication
systems do not necessarily imply smaller delays.

Motivated by the need to dispense with such common and restricting
assumptions, which may additionally cause unexpected behavior, we
develop a unified and general theoretical framework to compute
tight bounds on the distribution of response times in general
replication systems. These results immediately lend themselves to
the optimal number of replicas minimizing response time quantiles,
depending on the parameters of the system (e.g., the degree of
correlation amongst replicas). As a concrete application of our
framework, we design a novel replication policy which can improve
the stability region of classical fork-join queueing systems by
$\mathcal{O}(\ln K)$, in the number of servers $K$.
\end{abstract}

\section{Introduction}
Despite a significant increase in network bandwidth and computing
resources, major online service providers (and not only) still
face extremely volatile revenues due to the high variability of
latencies (aka response times/delays), especially in their tails
(e.g., the $95^\textrm{th}$-percentile). Several well-cited and
convincing studies reported significant potential revenue loss by
Google, Bing, or Amazon, were the latencies
higher~\cite{Schurman09,Hoff09,Souders09}; a typical cited
argument is that an additional $100$ms in latency would cost
Amazon $1\%$ of sales.

Given the late abundance of computing resources, a natural and yet
very simple way to improve latencies is \textit{replication}, a
concept which was traditionally used to improve the reliability of
fault-tolerant systems~\cite{Schneider90}. In the context of a
multi-server (parallel) system, the idea is merely to replicate a
task into multiple copies/replicas, and to execute each replica on
a different server. By leveraging the statistical variability of
the servers themselves, as execution platforms, it is expected
that some replicas would finish much faster than others; for a
discussion of various system/OS factors affecting execution times
see~\cite{Dean13}. The key gain of executing multiple replicas is
not to reduce the average latency, but rather the latency tail
which is recognized as critically important for ensuring a
consistently fluid/natural responsiveness of systems. Therefore,
replication can be regarded as being instrumental to the
development of ``latency tail-tolerant systems", similarly to its
role in fault-tolerant systems~\cite{Dean13}.

While the idea of using redundant requests is not new, as it has
been used to demonstrate significant speedups in parallel programs
\cite{Ghare04,Heymann01}, it has become very attractive with its
implementation in the MapReduce framework through the so-called
`backup-tasks'~\cite{Dean08}. Thereafter there has been a surge of
very high-quality empirical work which has convincingly
demonstrated the benefits of using redundancy for significant
latency improvement, both in the mean and also top percentiles.
Such works include latency reductions in Google's distributed
systems~\cite{Dean12}, in DNS queries and database
servers~\cite{Vulimiri13}, key-value storage
systems~\cite{Stewart13}, cloud storage systems~\cite{Wu15}, or
significant speed-ups of small jobs in
data-centers~\cite{Ananthanarayanan13} or short TCP
flows~\cite{XuLi14}.

Such empirical work has been complemented by several excellent
analytical studies (see the Related Work section), which have
provided fundamental insight into the benefits of replication.
Constrained by analytical tractability, most of these works make
several strong assumptions: not only the arrivals are Poisson and
the service times are exponentially distributed (i.e., typical
assumptions in the queueing literature), but the service times of
the replicas plus the corresponding original tasks are
statistically independent. By challenging these assumptions,
especially the last two, we first provide some elementary
analytical arguments, along with some simulation results, that the
benefits of replication are highly dependent on both the
distributional and correlation structures of the service times. A
convincing example is that the stability region of a system is not
monotonous in the replication factor. For instance, by adding a
replica server an overloaded system can be stabilized, an
advantage which however vanishes by adding additional replica
servers, whence the subtitle of this paper.

The contribution of this paper is a general analytical framework
to compute stochastic bounds on the response time distributions in
replication systems. In particular, our framework covers scenarios
with Markovian arrivals, general service time distributions
(subject to a finite moment generating function), and a
correlation model amongst the original and replicated tasks. Using
back-of-the-envelope calculations, our results can be immediately
used for engineering purposes (e.g., to determine the optimum
number of replicated servers to minimize the top percentiles of
latencies). A key feature of our methodology is that it relies on
a powerful martingale methodology which was recently shown to
provide remarkably accurate stochastic bounds in various and
challenging queueing systems with non-Poisson arrivals~(see,
e.g.,~\cite{CiPoSc13Inf,PoCi14,RiPoCi15}). According to several
numerical/simulation illustrations, our results exhibit a similar
high accuracy, including the challenging case of Markovian
arrivals.

To concretely illustrate the applicability of our results we
consider two applications. The first is to improve the performance
of MapReduce systems (which can be abstracted by a Fork-Join (FJ)
queueing model) through replication. In particular, we design an
elementary replication policy which can significantly improve not
only delay quantiles (e.g., by a factor of roughly $2$), but more
fundamentally the stability region of a FJ system by a logarithmic
factor $\mathcal{O}(\ln K)$ in the number of servers $K$; our
analysis provides a theoretical understanding of the benefits of
using back-up tasks in MapReduce, as a proposal to alleviate the
problem of stragglers~\cite{Dean08}. Albeit such a theoretical
benefit is obtained under strong exponential and statistical
independence assumptions, simulation results show that the
underlying numerical benefits carry over to realistic scenarios
subject to correlations amongst replicas. The second application
investigates the analytical tradeoff between resource usage and
response times under replication, a matter which has recently been
addressed through Google and Bing empirical studies. The key
analytical insight is that increasing resource usage through
replication yields a substantial reduction of response time upper
quantiles if the service times of the replicas are sufficiently
independent (i.e., subject to a low correlation factor, to be
later formally described).

The rest of the paper is organized as follows. In
Section~\ref{sec:rmrw} we introduce the analytical models and
discuss related work. In Section~\ref{sec:eai} we provide several
insights into the benefits of replication, by following elementary
models and derivations. In Section~\ref{sec:theory} we provide our
general theoretical framework dealing with both Poisson and
Markovian arrivals, and also independent and correlated replicas
(i.e., four scenarios). In Section~\ref{sec:app} we investigate
the two applications of our analytical framework. In
Section~\ref{sec:con} we conclude the paper.

\section{Replication Models and Related Work}\label{sec:rmrw}
We consider a parallel system with $K$ homogeneous servers with
identical speeds (see Figure~\ref{fig:replica}). A stream of tasks
arrives at a dispatcher according to some stationary and ergodic
point process; the interarrival times are denoted by $t_i$ with
the mean $E\left[t_1\right]=\frac{1}{\lambda}$, whereas their
number within the (continuous) time interval $(0,t]$ is denoted by
$N(t)$. This process can have a Markov structure, to be more
precisely defined in Section \ref{subsec:mair}.

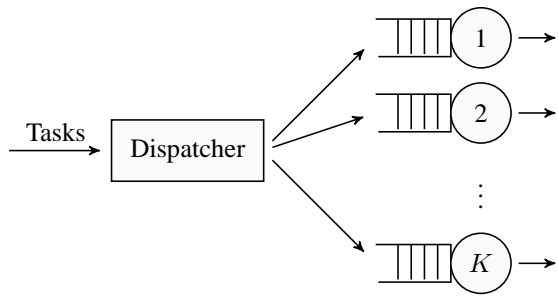
\begin{figure}[t]
\begin{center}
\tikzstyle{every state}=[circle,fill=black!2,minimum
size=0.8cm,inner sep=0pt]
\begin{tikzpicture}[->,>=stealth',shorten >=1pt,auto,node distance=1.3cm,semithick]
\node[draw=none] at (6.4,-2.25cm) (9) {$\vdots$};
\node[draw,rectangle,fill=black!2,inner sep=0.25cm] at
(2.5,-1.75cm) (11) {Dispatcher}; \node at (3.5,-1.75cm) (12) {};
\node at (1.5,-1.75cm) (5) {}; \node at (0,-1.75cm) (6) {};
\draw[->] (6)  edge node {Tasks}   (5); \foreach \j in {1,2,4} {
  \draw[-] (5,-\j+1) -- ++(1cm,0) -- ++(0,-0.5cm) -- ++(-1cm,0);
  \foreach \i in {1,...,4}
    \draw[-] (6cm-\i*5pt,-\j+1) -- +(0,-0.5cm);
  \node[state] at (6.4,-\j-0.25+1) (\j+6) {\if \j4 {$K$} \else {\j}\fi};
  \draw[->] (6.9,-\j-0.25+1) -- +(15pt,0);
  \node at (5,-\j-0.25+1) (\j) {};
  \draw[->] (12)  edge node {}   (\j);
}
\end{tikzpicture}
\caption{A parallel system with $K$ servers; tasks are dispatched
to the servers in a possibly replicated manner (i.e., the same
task to multiple servers)} \label{fig:replica}
\end{center}
\end{figure}

The service times of the tasks are denoted by $x_i$ and are drawn
from some general distribution subject to a finite moment
generating function; the average is set to
$E\left[x_1\right]=\frac{1}{\mu}$. For numerical purposes, we will
occasionally use the analytically convenient Pareto distribution,
which can be approximated within our theoretical framework through
a hyperexponential distribution.

The \textit{utilization} of one server, in a system without
replicas where tasks are symmetrically distributed, is denoted by
\begin{equation*}
\rho:=\frac{\lambda}{K\mu}~.
\end{equation*}
In general, it is assumed for stability that $\rho<1$. However, in
a system with replication, the expression of the utilization
$\rho$ may change depending on various factors (e.g., the
distribution of tasks' service times) whereas the stability
condition may fail (such occurrences will be specifically
indicated).

\subsection{Tasks Assignment Policies}\label{sec:tap}
A crucial design component in the parallel server system is the
task assignment policy, i.e., how are the incoming tasks assigned
to the $K$ servers for processing? While many such policies have
been analytically and empirically studied, we focus on few
relevant ones in terms of both performance and overhead:
\begin{itemize}
\item{\textbf{Random}}:~Each task is dispatched, uniformly at
random, to one of the $K$ servers; in the particular case of a
Poisson (overall) arrival stream, the tasks arrived at some server
follow a Poisson distribution with rate $\frac{\lambda}{K}$.

\item{\textbf{Round-Robin}}:~Tasks are deterministically
dispatched in a circular fashion to the $K$ servers, i.e., task
$i$ is assigned to server $i~\textrm{mod}~K$ (with the convention
that $0$ stands for $K$); in the case of a Poisson stream, the
interarrival times at some server follow an Erlang $E(K,\lambda)$
distribution.

\item{\textbf{G/G/K}}:~Unlike the previous two schemes, which
immediately dispatch the incoming tasks, and whereby tasks enqueue
at the assigned servers, in $G/G/K$ it is the responsibility of
each server to fetch a single task, from a centralized queue at
the dispatcher, once they become idle.

\item{\textbf{(Full-)Replication}} ($K$-replication factor):~Each
incoming task $i$ is replicated to all the $K$
servers\footnote{For the sake of clarification, the original task
is called a replica as well.}; the corresponding service times are
denoted by $x_{i,j}$ for $j=1,\dots,K$. Alike in \textit{Random}
and \textit{Round-Robin}, each server maintains a local (FIFO)
queue.

\item{\textbf{Partial-Replication}} ($k$-replication
factor):~Besides \textit{full} replication, a task may be
replicated to only $k\leq K$ servers; for simplicity, we will
assume that both $K$ and $k$ are powers of $2$, and that
consecutive blocks of $k$ replicas are allocated to the $K$
servers in a round-robin manner. We call the underlying strategy
(strict) \textit{Partial-Replication} when $1<k<K$, and
\textit{No-Replication} when $k=1$.
\end{itemize}

In terms of analytical tractability, \textit{Random} and
\textit{Round-Robin} are significantly more amenable than $G/G/K$;
in fact, exact results are known for $G/G/K$ only in the case of
Poisson arrivals and exponential service times (in which case the
model is denoted by $M/M/K$). However, $G/G/K$ yields
significantly better performance (i.e., much smaller response
times of the tasks) than \textit{Random} and \textit{Round-Robin},
especially in the case of high variability of the tasks' service
times; in turn \textit{Round-Robin} slightly outperforms
\textit{Random} (for an excellent related discussion
see~\cite{MorHarchol12}, pp. 408-430).

It is to be noted however that the superiority of $G/G/K$ is
(partly) due to the availability of additional system information,
i.e., each task is `informed' about which server is idle such that
it can minimize its response time. In turn, amongst policies which
are oblivious to such information, \textit{Round-Robin} was shown
to be optimal for exponential~\cite{Ephremides80,Walrand88} and
increasing failure rate distributions~\cite{Liu94}; for a recent
state-of-the-art queueing analysis of Round-Robin
see~\cite{Hyytia13}.

A more sophisticated replication strategy was proposed in the
context of massively parallel data processing systems in which
(large) jobs are forked/split into (smaller) tasks, each assigned
to a server; once a fraction of the tasks finish their executions,
each of the remaining (and straggling) tasks are further
replicated. This model appeared in the MapReduce
specification~\cite{Dean08}, and was formally studied in terms of
the underlying response time / resource usage tradeoff, albeit by
disregarding queueing effects in~\cite{WangJW15}. Another strategy
used by Google is to defer the start of executing the second
replica for some suitable time, in order to reduce resource
usage~\cite{Dean13}.

\subsection{Purging/Cancellation Models}
Before discussing the relative performance of \textit{Replication}
to other policies, we first define how replication strategies deal
with residual resources.
\begin{itemize}
\item{\textbf{Purging}}:~A task is considered to complete (and
hence its response time is determined) when the fastest replica
finishes its execution; at the same time, the residual replicas
are all purged/cancelled from the system (with some negligible
related cost).

\item{\textbf{Non-Purging}}:~A task response time is determined as
in the \textit{Purging} case, but the remaining replicas leave the
system no sooner than their execution end.
\end{itemize}

\textit{Purging} is clearly more efficient from a purely
\textit{task response-time} perspective, as it frees resources
once the first replica completes; this operation demands however
synchronization overhead amongst the servers. One basic reason for
this superiority is that in the \textit{Non-Purging} model the
utilization increases $k$-fold for a $k$-replication factor, for
any task service time distribution; in particular, a
$2$-replication factor requires the replica-free system to have a
utilization under $50\%$ (otherwise the response times get
unbounded). In turn, the growth of the utilization is less
pronounced in the \textit{Purging} model, depending on the type of
distribution of the service times; in fact, and perhaps
counterintuitively, there is no increase in the case of the
exponential distribution regardless the replication factor (for a
follow-up discussion see \ref{sec:stability}).

Besides the advantage of a better queueing performance, the
\textit{Purging} model is much easier to analyze. In fact, the
only analytical study of \textit{Non-Purging} is considered
in~\cite{Vulimiri13}; besides the classical and simplifying
assumptions of Poisson arrivals and exponential service times, the
underlying queueing analysis critically relies on an artificial
statistical independence assumption amongst the queues. Using this
assumption, it is shown that below a utilization threshold of
$33\%$, a $2$-replication factor strategy does improve the
response time despite the inherent doubling of the utilization.

A generalized version of \textit{Partial-Replication} considers
the situation when the fastest $l\leq k$ replicas finish their
execution (the residual ones being subsequently purged); a
practical use of this generalization is in coded distributed
storage systems~\cite{Shah13}. The central result is that under
arrivals with \textit{independent increments}, and exponential (or
`\textit{heavier}') service times, \textit{Full-Replication}
minimizes the (average) response times. In turn, in the case of
`\textit{lighter}' service times and $100\%$ utilization, a
replication factor greater than one is detrimental. The underlying
proofs use an ingenious coupling argument, but do not provide
quantitative results.

Another set of qualitative results, on the superiority of
\textit{Full-Replication} for a specific type of service time
distributions (including the exponential) is presented
in~\cite{Koole2007}. Interestingly, under a discrete time model
with geometric service time distributions, is is shown
in~\cite{Borst03} through quantitative results that
\textit{No-Replication} is optimal (for an explanation of the
apparent contradiction between exponential and geometric service
time distributions, with respect to the optimality of the
replication model, see~\cite{Koole2007}).

Recently, an \textit{Early Purging} model, in which residual
replicas are purged once the first one starts its execution, has
been mentioned in~\cite{Dean13} and further analyzed
in~\cite{Joshi15}; besides reducing the resource usage, it was
shown that this model can also significantly reduce response times
despite the apparent loss of diversity, at high utilizations.

The perhaps most fundamental related result obtained so far is a
recent exact analysis under the purging model~\cite{Gardner15}.
While the analysis critically relies on the Poisson/exponential
models, a key analytical contribution is capturing multi-class
arrivals (i.e., different arrival streams are served by different
sets of (replicated) servers). The elegance of the results lends
itself to several fundamental and contriving insights into the
properties of replication, especially accounting for the
multi-class feature of the model.

More general stochastic bounds in replication systems are obtained
in~\cite{FiJi15}, including the very challenging multi-stage case,
by leveraging the analytical power of the stochastic network
calculus methodology. While the underlying arrival and service
models from~\cite{FiJi15} are more general than ours, the crucial
difference is in handling the underlying correlation structures:
concretely, while~\cite{FiJi15} deals with arbitrary correlation
structures yielding stochastic bounds holding in great generality,
we exploit the specific correlation structures through the
martingale methodology.

\section{Elementary analytical Insights}\label{sec:eai}
Here we complement the previous discussion by providing several
motivating examples. After quickly contrasting the task assignment
policies introduced earlier, under the Poisson/exponential models,
we explore more general service time distributions. The key
insight is that the stability region of replicated systems is not
necessarily monotonous in the number of replicas; depending on the
service distribution, any of the policies \textit{No-Replication},
\textit{Full-Replication}, or \textit{Partial-Replication} can
yield the largest stability region.

\subsection{The M/M model} For some immediate analytical insight, consider the classical example of
Poisson arrivals and exponential service times. Due to a lack of
closed-form formulas for all considered policies, for large number
of servers, we assume that $K=2$; recall that the (server)
utilization is $\rho=\frac{\lambda}{2\mu}$.

The average response times for the four policies (i.e.,
\textit{Random}, \textit{Round-Robin}, \textit{M/M/2}, and
\textit{Replication}) are, respectively,
\begin{eqnarray*}
E\left[T_{Rnd}\right]&=&\frac{1}{\mu(1-\rho)}\\
E\left[T_{RR}\right]&=&\frac{2}{\mu\left(1-4\rho+\sqrt{1+8\rho}\right)}\\
E\left[T_{MM2}\right]&=&\frac{1}{\mu\left(1-\rho^2\right)}\\
E\left[T_{Rep}\right]&=&\frac{1}{2\mu(1-\rho)}~.
\end{eqnarray*}
Note that \textit{Replication} induces an $M/G/1$ queueing model,
in which the service time is the first order statistics of two
i.i.d. random variables (in the current case being an exponential
with half of the mean of the original). Immediate comparisons
reveal that the minimum (`best') response time is attained by
\textit{Replication}; a key reason is that the gain of sampling
the minimum of exponential random variables, together with the
\textit{Purging} model, significantly dominates the cost of
temporary redundant resource usage. In turn, the maximum (`worst')
response time is attained by \textit{Random}; the relative
performance of \textit{Round-Robin} and \textit{M/M/2} depends on
the value of $\rho$. Lastly, we point out that the superiority of
\textit{Replication} immediately extends to larger values of $K$.

More general results in terms of lower and upper bounds on the
average response time in the case of a variant of
\textit{Replication}, in which only the fastest $l\leq K$ tasks
are required to complete (whilst the residual tasks are purged)
(and which was \textit{qualitatively} studied in~\cite{Shah13}),
appeared in~\cite{Joshi14}; in particular, it was shown that
\textit{Replication} outperforms the corresponding \textit{M/M/K}
model. Further upper bounds were derived in the case of general
service time distributions, using existing bounds on the first two
moments of the $l^{\textrm{th}}$ order statistics.

\subsection{Beyond the M model}\label{sec:stability}
In the previous example with exponential service times, the
stability region is invariant to the replication factor; the
reason is that the $1^{\textrm{st}}$ order statistic of $K$
(independent) exponential random variables $exp(\mu)$ is an
exponential random variable $exp(K\mu)$. The next elementary
examples show that any strategy amongst \textit{No-Replication},
\textit{Full-Replication}, or \textit{Partial-Replication} can
yield the strictly largest stability regions (and hence `best'
response times, at least in some subset of the stability region; a
follow-up discussion will be given in Section~\ref{subsec:iacr}).
A fundamental reason is the assumption of independent service
times of the replicas, which motivates the need for accounting for
some correlation structures.

Recall that in the \emph{No-Replication} scenario, a necessary and
sufficient condition for stability (or, equivalently, for finite
response times) is
\begin{equation*}
 \E[x_1]<K\E[t_1]~.
\end{equation*}

In the case of \emph{Full-Replication}, the corresponding
stability condition is given by
\begin{equation*}
 \E\left[\min\left\{x_1,\ldots, x_n\right\}\right]<\E[t_1]~,
\end{equation*}
whereas in the case of \emph{Partial-Replication} with
replication factor $k$ by
\begin{equation}\label{eq:stability}
 \E\left[\min\left\{x_1,\ldots, x_k\right\}\right]<\frac{K}{k}\E[t_1]~.
\end{equation}

Denoting the CCDF of $x_i$ by
\begin{equation*}
 f(x):=\P(x_1\ge x)~,
\end{equation*}
we observe from the previous stability conditions that the `best'
replication-factor $k$ is
\begin{equation}
\argmin_k k\int f^k(x)dx~.\label{eq:bestk}
\end{equation}

We next present examples of different distributions for $x_i$
resulting in `best' scenarios for each of the three replication
strategies.

\subsubsection{No-Replication: Uniform} Assume uniformly
distributed service times, i.e., $x_i\sim\mathcal{U}_{[0,1]}$. The
following argument shows that in this case replication is
detrimental, i.e.,
\begin{equation*}
 \E\left[x_1\right]<k\E\left[\min\left\{x_1,\ldots,x_k\right\}\right]~,
\end{equation*}
for any $k\ge 2$~:
\begin{align*}
 k\E\left[\min \left\{x_1,\ldots,x_k\right\}\right] = & k\int_0^{\infty}\P\left(\min \left\{x_1,\ldots,x_k\right\}\ge x\right)dx\\
 = & k\int_0^{\infty}\P\left(x_1\ge x\right)^k dx\\
 = & \int_0^{1}kx^kdx = \frac{k}{k+1}>\frac{1}{2}=\E\left[x_1\right]~.
\end{align*}
The same argument additionally shows that
\textit{Partial-Replication} is better than
\textit{Full-Replication}. This result extends the qualitative
observation from~\cite{Shah13} (i.e., Theorem~4 therein,
restricted to a $100\%$ utilization, and hence an unstable regime)
to any (stable) utilization.

\subsubsection{Full-Replication: Weibull} Let the $x_i$ now be
Weibull distributed, i.e.,
$f(x)=e^{-\left(x/\lambda\right)^\alpha}$. For $\alpha<1$, a
higher degree  of replication is `better', as shown below:
\begin{align*}
 k\E\left[\min\left\{x_1,\ldots,x_k\right\}\right] = & k\int_0^{\infty}\P(\min \left\{x_1,\ldots,x_k\right\}\ge x)dx\\
 = & k\int_0^{\infty}e^{-k\left(x/\lambda\right)^{\alpha}}dx\\
 = & k\frac{\lambda}{k^{1/\alpha}}\Gamma(1+1/\alpha)~.
\end{align*}
By the assumption on $\alpha$, the last term is monotonically
decreasing in $k$~. Note that in the special case of exponentially
distributed $x_i$, i.e., $\alpha=1$, replication is neither
beneficial nor detrimental (from the point of view of the
stability region), as pointed out earlier. This result also
extends the qualitative observation from~\cite{Shah13} (i.e.,
Theorem~3) to any (stable) utilization.

\subsubsection{Partial Replication: Pareto} Lastly we consider the
Pareto distribution, i.e., $f(x)=x^{-\alpha}$ for $x\ge 1$. For a
suitably chosen $\alpha>1$, it can be shown that (strict)
\textit{Partial-Replication} can become `better' than both
\textit{Full-Replication} and \textit{No-Replication}:
\begin{align*}
 k\E\left[\min \left\{x_1,\ldots,x_k\right\}\right] = & k\int_0^{\infty}\P\left(\min \left\{x_1,\ldots,x_k\right\}\ge x\right)dx\\
 = & k+k\int_1^{\infty} x^{-k\alpha}dx=k+\frac{k}{k\alpha-1}~.
\end{align*}
It is clear that for sufficiently small $\alpha>1$, the minimal
value is attained for $k=2$~.

This last example highlights that the performance of replication
strategies heavily depends on the replication factor $k$, the service
time distribution, and other underlying assumptions. In
particular, performance is not monotonic in $k$, and thus an
optimization framework is desirable (related results, on the
actual response time distributions as a function of $k$ will be
provided in the next section).

\begin{figure}[t]
\begin{center}
\includegraphics[width=0.48\textwidth]{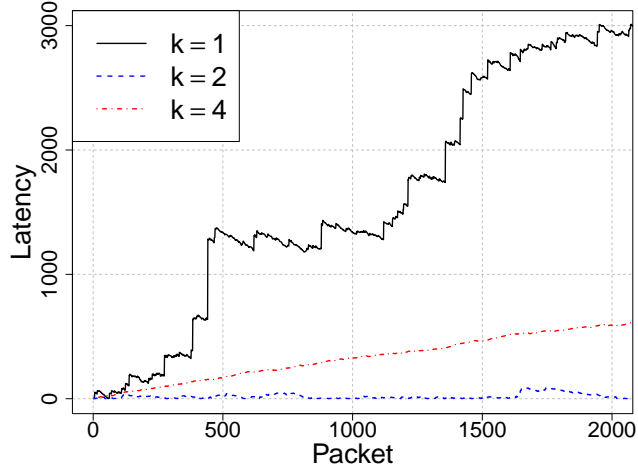}
\caption{From overload ($k=1$) to underload ($k=2$) and back
($k=4$) ($K=4$, $\alpha=1.1$, $\lambda=1$, and utilization
$\rho=2.75$ (for the non-replicated $k=1$
case))}\label{fig:pareto}
\end{center}
\end{figure}

For complementary numerical results illustrating the
counterintuitive effect of $k$, consider the Pareto distribution
with the assumption of independent service times of the $k$
replicas. Let $K=4$, arrival rate $\lambda=1$, $\alpha=1.1$ (for
the Pareto distribution), yielding a utilization $\rho=2.75$
(i.e., $275\%$). By plotting the simulated latencies of the first
$10^4$ packets, Figure~\ref{fig:pareto} shows that while the
system without replication is in overload, a replication factor of
$k=2$ stabilizes the system (reducing the utilization to $0.91$),
whereas a replication factor of $4$ puts the system back in
overload (increasing the utilization to $1.29$).

The non-monotonic behavior in $k$ disappears when the service
times are sufficiently correlated. Indeed, by taking the service
times of the replicas as $y+x_i$ (where the $x_i$ are Pareto distributed, and $y\ge 0$ is arbitrary), it holds:
\begin{align*}
 k\E[\min\{y+x_1,\ldots,y+x_k\}] & = k\E[y] + k\E[\min\{x_1,\ldots,x_k\}]\\
 & = k\E[y] + k+\frac{k}{k\alpha-1}\\
 & = k\left(\E[y] + \frac{k\alpha}{k\alpha-1}\right)~,
\end{align*}
so that (for a suitably chosen $\alpha>1$, and a sufficiently large value of $\E[y]$) the optimal value of $k$ in Eq.~(\ref{eq:bestk}) is $1$ (i.e., \textit{No-Replication} is `best').

\section{Theory}\label{sec:theory}
We assume a queueing system with $K$ servers and interarrival
times between jobs $i$ and $i+1$ denoted by $t_i$. Upon its
arrival, job $i$ is replicated to $k\le K$ servers where they are
processed with service times $x_{i,1},\ldots,x_{i,k}$,
respectively. For simplicity, we throughout assume that $K$ is an
integral multiple of $k$. Further, the jobs are assigned to the
$\frac{K}{k}$ batches in a round robin scheme, i.e. the
interarrival times for one batch can be described as:
\begin{equation*}
 \tilde{t}_i:=\sum_{j=0}^{K/k-1} t_{\left(i-1\right)\frac{K}{k}+j}~.
\end{equation*}

The following recursion describes the response time $r_{i+1}$ of
job $i+1$, i.e., the time between the job's arrival and its
service being complete:
\begin{equation*}
 r_1:=\min_{j\le k}x_{1,j}~,\quad r_{i+1}:=\min_{j\le k}\{x_{i+1,j}\}+\max\{0,r_{i}-\tilde{t}_{i}\}~,
\end{equation*}
resulting in a representation of the \emph{steady-state} response
time $r$ as:
\begin{equation}\label{eq:responsetime}
 r=_{\mathcal{D}}\max_{n\ge 1}\left\{\sum_{i=1}^{n+1} \min_{j\le k}\{x_{i,j}\}-\sum_{i=1}^n \tilde{t}_i\right\}~,
\end{equation}
where $=_{\mathcal{D}}$ stands for equality in distribution, and
the empty sum is by convention equal to $0$.

Depending on the correlation between either the interarrival times
and the service times, respectively, we consider four different
scenarios: In Subsection~\ref{subsec:iair}, all random variables
$t_i$, $x_{i,j}$ are assumed to be independent. In
Subsection~\ref{subsec:mair}, the interarrival times are driven by
a certain Markov chain, whereas in Subsection~\ref{subsec:iacr}
the service times are correlated through a common additive factor.
Finally, in Subsection~\ref{subsec:macr}, a combination of both
correlation models is considered.

\subsection{Independent Arrivals, Independent Replication}\label{subsec:iair}

As stated above, we consider the scenario of \emph{independent
replication}, i.e., $\left\{t_i,x_{i,j}\;\middle|\;i\ge 1,\;j\le
k\right\}$ is an independent family of random variables.

The next Theorem provides an upper bound on the CCDF of $r$ as
defined in Eq~(\ref{eq:responsetime}):
\begin{theorem}
\label{tm:iair}
 Let $\theta_{\text{ind}}$ be defined by
 \begin{equation*}
  \theta_{\text{ind}}:=\sup\left\{\theta\ge 0\;\middle|\;\E\left[e^{\theta\min_{j\le k}\{x_{i,j}\}}\right]\E\left[e^{-\theta t_i}\right]^{\frac{K}{k}}\le
  1\right\}~.
 \end{equation*}
 Then the following bound on the response time holds for all $\sigma\ge 0$:
 \begin{equation*}
  \P(r\ge\sigma)\le \E\left[e^{\theta_{\text{ind}}\min_{j\le k}\{x_{1,j}\}}\right]e^{-\theta_{\text{ind}}\sigma}~.
 \end{equation*}
\end{theorem}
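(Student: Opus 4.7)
The plan is to recast the supremum in Eq.~(\ref{eq:responsetime}) as the running maximum of an exponential supermartingale, and then apply Doob's maximal inequality.

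First, I would set $y_i := \min_{j \le k}\{x_{i,j}\}$ and define the partial sums
$$S_n := \sum_{i=1}^{n+1} y_i - \sum_{i=1}^{n} \tilde{t}_i, \qquad n \ge 0,$$
so that Eq.~(\ref{eq:responsetime}) gives $\P(r \ge \sigma) = \P\left(\max_{n \ge 1} S_n \ge \sigma\right) \le \P\left(\sup_{n \ge 0} S_n \ge \sigma\right)$. Under the independence assumption of this subsection, the two families $(y_i)_{i \ge 1}$ and $(\tilde{t}_i)_{i \ge 1}$ are mutually independent and each i.i.d.; in particular, each $\tilde{t}_i$ is a sum of $K/k$ i.i.d.\ copies of $t_1$, and each $y_i$ is the minimum of $k$ i.i.d.\ copies of $x_{1,1}$.

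Next, for $\theta \in [0, \theta_{\text{ind}}]$ I would introduce $M_n := e^{\theta S_n}$ and let $\mathcal{F}_n$ be the natural filtration generated by $y_1, \ldots, y_{n+1}$ and $\tilde{t}_1, \ldots, \tilde{t}_n$. By independence of the fresh increment $y_{n+2} - \tilde{t}_{n+1}$ from $\mathcal{F}_n$, a direct computation gives
$$\E\left[M_{n+1} \mid \mathcal{F}_n\right] = M_n \cdot \E\left[e^{\theta y_1}\right] \E\left[e^{-\theta t_1}\right]^{K/k} \le M_n,$$
where the last inequality is exactly the defining condition of $\theta_{\text{ind}}$. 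Hence $(M_n)_{n \ge 0}$ is a nonnegative supermartingale. Applying Doob's maximal inequality at level $e^{\theta \sigma}$ now yields
$$\P(r \ge \sigma) \le \P\left(\sup_{n \ge 0} M_n \ge e^{\theta \sigma}\right) \le e^{-\theta \sigma}\, \E[M_0] = e^{-\theta \sigma}\, \E\left[e^{\theta \min_{j \le k}\{x_{1,j}\}}\right],$$
and choosing $\theta = \theta_{\text{ind}}$ delivers the stated bound.

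I do not anticipate a serious obstacle; the only care required is checking that the supremum defining $\theta_{\text{ind}}$ is attained, so that the supermartingale inequality remains valid at the boundary value. This follows from the continuity of $\theta \mapsto \E[e^{\theta y_1}] \E[e^{-\theta t_1}]^{K/k}$ on the interior of its domain, which is guaranteed by the finite moment generating function hypothesis on the service times (and, implicitly, the stability condition of Section~\ref{sec:stability}, which makes the derivative at $\theta = 0$ negative and hence $\theta_{\text{ind}} > 0$). The same exponential martingale scheme is expected to carry over to Subsections~\ref{subsec:mair}--\ref{subsec:macr}, with the natural filtration enlarged to absorb the Markov modulator or the common additive correlation factor, and with the drift condition defining $\theta_{\text{ind}}$ adjusted accordingly.
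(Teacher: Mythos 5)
Your proposal is correct and follows essentially the same route as the paper: both build the exponential (super)martingale $e^{\theta S_n}$ from the random-walk representation in Eq.~(\ref{eq:responsetime}) and conclude with a maximal-type bound, exactly in the spirit of Kingman's GI/GI/1 argument that the paper cites. The only cosmetic difference is that the paper applies the optional stopping theorem at the first passage time of level $\sigma$ (truncated at $l$ and letting $l\to\infty$), whereas you invoke Doob's maximal inequality for nonnegative supermartingales --- these are interchangeable here, and your explicit handling of the boundary value $\theta_{\text{ind}}$ via the $\le 1$ condition is if anything slightly more careful than the paper's use of equality.
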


Note that, given the stability conditions from
Eq.~(\ref{eq:stability}), $\theta_{\text{ind}}>0$ as
\begin{align*}
 \frac{d}{d\theta} & \left.\E\left[e^{\theta\min_{j\le k}\{x_{i,j}\}}\right]\E\left[e^{-\theta t_i}\right]^{\frac{K}{k}}\right|_{\theta=0}\\
 & = \E\left[\min_{j\le k}\{x_{i,j}\}\right]-\frac{K}{k}\E\left[t_i\right] < 0~.
\end{align*}

\begin{proof}
 Define the process $M(n)$ by
 \begin{equation*}
  M(n+1):=e^{\theta_{\text{ind}}\left(\sum_{i=1}^{n+1} \min_{j\le k}\{x_{i,j}\}-\sum_{i=1}^n \tilde{t}_i\right)}~.
 \end{equation*}
 $M(n)$ is a martingale:
 \begin{align*}
  \E & [M(n+1)M(n)^{-1}\mid M(1),\ldots,M(n)]\\
  & = \E\left[e^{\theta_{\text{ind}}\left(\min_{j\le k}\{x_{i,j}\}-\tilde{t}_n\right)}\right]\\
  & = \E\left[e^{\theta_{\text{ind}}\min_{j\le k}\{x_{i,j}\}}\right] \E\left[e^{-\theta_{\text{ind}}t_n}\right]^{\frac{K}{k}}\\
  & = 1~.
 \end{align*}
 Now define the stopping $N$ as
 \begin{equation*}
  N:=\min\left\{n\ge 0\;\middle|\; \sum_{i=1}^n\min_{j\le k}\{x_{i,j}\}-\sum_{i=1}^{n-1} \tilde{t}_i\ge\sigma\right\}~,
 \end{equation*}
 and note that $\{N<\infty\}=\{r\ge\sigma\}$. With the optional stopping theorem
 \begin{align*}
  \E\left[e^{\theta_{\text{ind}}\min_{j\le k}\{x_{i,j}\}}\right ] & =\E\left[M(1)\right]\\
  & =\E\left[M(N\wedge l\right]\\
  & \ge \E\left[M(N\wedge l)1_{N\le l}\right]\\
  & \ge e^{\theta_{\text{ind}}\sigma}\P(N\le l)
 \end{align*}
 Now let $l\to\infty$.
\end{proof}

We point out that the proof essentially follows the bounding
technique for GI/GI/1 queues from~\cite{Kingman64}.

\subsection{Markovian Arrivals, Independent Replication}\label{subsec:mair}

We now turn to the more realistic scenario where the interarrival
times are correlated: A two-state Markov chain $Z(n)$ alternates
between \emph{active} and \emph{inactive} periods; while in the
active state, exponentially distributed interarrival times are
generated with parameter $\lambda_{\text{act}}$, and the chain
turns inactive with probability $p>0$. In the inactive state,
\emph{one} interarrival time (exponentially distributed, parameter
$\lambda_{\text{inact}}<\lambda_{\text{act}}$) is generated, and
the chain jumps back to the active state (see
Figure~\ref{fig:markovchain}). Formally, let
\begin{equation*}
 t_{i,\text{act}}\sim\text{Exp}(\lambda_{\text{act}})~,\quad t_{i,\text{iact}}\sim\text{Exp}(\lambda_{\text{iact}})
\end{equation*}
be i.i.d. random variables and define the sequence of interarrival
times $t_i$ by
\begin{equation*}
 t_i:=t_{i,Z(i)}~.
\end{equation*}

\begin{figure}[t]
\begin{center}
\tikzstyle{every state}=[circle,fill=black!2,minimum
size=25pt,inner sep=0pt]
\begin{tikzpicture}[->,>=stealth',shorten >=1pt,auto,node distance=1.3cm,semithick]
      \node[state]                               (0) {$\text{iact}$};
      \node[state,right=of 0]                    (1) {$\text{act}$};
      \node[draw=none, below=of 0, yshift=0.2cm] (2) {$ $};
      \node[draw=none, below=of 1, yshift=0.2cm] (3) {$ $};
  \draw
  (1)   edge[loop right] node {$1-p$}   (1)
  (0)   edge[bend left=33] node {$1$}   (1)
  (1)   edge[bend left=33] node {$p$}   (0)
  (0)   edge node {$\lambda_{\text{iact}}$}   (2)
  (1)   edge node {$\lambda_{\text{act}}$}   (3);
\end{tikzpicture}
\caption{Two-state Markov chain $Z(n)$} \label{fig:markovchain}
\end{center}
\end{figure}
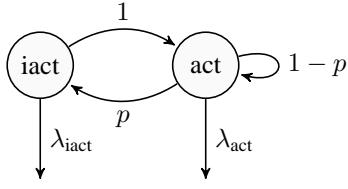

The steady state distribution $\pi$ of the Markov chain is given by
\begin{equation*}
 \pi_{\text{act}}=\frac{1}{1+p}~,\quad\text{and}\quad\pi_{\text{iact}}=\frac{p}{1+p}~,
\end{equation*}
such that for the average of the interarrival times holds
\begin{equation}
\label{eq:averaget}
 \E[t_i] = \left(\lambda_{\text{act}}^{-1} + p\lambda_{\text{iact}}^{-1}\right)\big/\left(1+p\right)
\end{equation}

Note that the transition matrix of $Z(n)$ is given by:
\begin{equation*}
  T:=\begin{pmatrix}0 & 1\\p & 1-p\end{pmatrix}~.
 \end{equation*}

In order to state the main result of this section, we need the
following transform of matrix $T$:
\begin{definition}
 For $0\le\theta<\lambda_{\text{iact}}$, let $T_{\theta}$ denote the following matrix:
 \begin{equation*}
  T_{\theta}:=\begin{pmatrix}0 & \frac{\lambda_{\text{act}}}{\lambda_{\text{act}}+\theta}\\p\frac{\lambda_{\text{iact}}}{\lambda_{\text{iact}}+\theta} & (1-p)\frac{\lambda_{\text{act}}}{\lambda_{\text{act}}+\theta}\end{pmatrix}~.
 \end{equation*}
  Further, let $\xi(\theta)$ denote the spectral radius of $T_{\theta}$, and $h=(h_{\text{act}},h_{\text{iact}})$ be a corresponding eigenvector.
\end{definition}
Note that $T_{\theta}$ is an exponential transform of $T$ which
has the Laplacians of the respective arrival times as an
additional factor in each column. In particular, with $\theta=0$
we recover the transition matrix itself, i.e., $T_{0}=T$.

The following Theorem is the analogous result to
Theorem~\ref{tm:iair} (note that the service times $x_{i,j}$ are
still assumed to be i.i.d.):
\begin{theorem}
\label{tm:mair}
 Let $1\le k\le K$ and $\theta_{\text{mkv}}$ be defined by
 \begin{equation*}
  \theta_{\text{mkv}}:=\sup\left\{\theta\ge 0\;\middle|\;\E\left[e^{\theta\min_{j\le k}\{x_{i,j}\}}\right]\xi^{\frac{K}{k}}(\theta)\le
  1\right\}~.
 \end{equation*}
 Then, for the system with replication to $k$ out of $K$ servers, the following bound on the response time holds for all $\sigma>0$:
 \begin{equation*}
  \P(r\ge\sigma)\le \E\left[e^{\theta_{\text{mkv}}\min_{j\le
  k}\{x_{i,j}\}}\right]e^{-\theta_{\text{mkv}}\sigma}~.
 \end{equation*}
\end{theorem}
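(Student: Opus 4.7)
The approach is to mimic the martingale argument of Theorem~\ref{tm:iair}, but to absorb the Markovian correlation of $\{t_i\}$ through a Perron--Frobenius eigenvector correction appended to the exponential process. The starting point is the eigenvalue equation $T_\theta h=\xi(\theta)h$. Since $(T_\theta)_{s,s'}=T_{s,s'}\cdot \lambda_{s'}/(\lambda_{s'}+\theta)$ is exactly the joint contribution of the transition $s\to s'$ and the Laplace transform $\E[e^{-\theta t_{n+1}}\mid Z(n+1)=s']$, this identity rewrites, in probabilistic form, as
$$\E\bigl[h_{Z(n+1)}\,e^{-\theta t_{n+1}}\,\big|\,Z(n)\bigr]=\xi(\theta)\,h_{Z(n)}.$$
Iterating it $K/k$ times along a single batch (equivalently, using $T_\theta^{K/k}h=\xi(\theta)^{K/k}h$) yields
$$\E\bigl[h_{Z((i+1)K/k)}\,e^{-\theta\tilde t_{i+1}}\,\big|\,Z(iK/k)\bigr]=\xi(\theta)^{K/k}\,h_{Z(iK/k)}.$$

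With this in hand I would set
$$M(n+1):=\frac{h_{Z(nK/k)}}{h_{Z(0)}}\,\exp\!\left(\theta_{\text{mkv}}\sum_{i=1}^{n+1}\min_{j\le k}\{x_{i,j}\}-\theta_{\text{mkv}}\sum_{i=1}^n\tilde t_i\right)$$
and check that it is a martingale: by independence of the $x_{i,j}$'s the one-step conditional increment contributes a factor $\E[e^{\theta_{\text{mkv}}\min_{j\le k}\{x_{i,j}\}}]$, and the eigenvector identity above contributes a factor $\xi(\theta_{\text{mkv}})^{K/k}$, whose product equals $1$ by the very definition of $\theta_{\text{mkv}}$. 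Positivity $\theta_{\text{mkv}}>0$ is verified by differentiating $\E[e^{\theta\min_{j\le k}\{x_{i,j}\}}]\xi(\theta)^{K/k}$ at $\theta=0$, using $\xi(0)=1$ together with $\xi'(0)=-\E[t_i]$ read off Eq.~(\ref{eq:averaget}), exactly as in the remark after Theorem~\ref{tm:iair}.

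The remainder of the proof then tracks the one of Theorem~\ref{tm:iair} essentially verbatim: the stopping time $N:=\min\{n\ge 0:\sum_{i=1}^{n+1}\min_{j\le k}\{x_{i,j}\}-\sum_{i=1}^n\tilde t_i\ge\sigma\}$ satisfies $\{N<\infty\}=\{r\ge\sigma\}$, and optional stopping applied to $(N+1)\wedge l$ followed by $l\to\infty$ yields the desired tail bound. The main obstacle, absent in the i.i.d.\ case, is the eigenvector ratio $h_{Z(NK/k)}/h_{Z(0)}$ that appears when lower-bounding $M$ on the event $\{N<\infty\}$. Because $T_\theta$ is strictly positive and irreducible, Perron--Frobenius guarantees that $h$ is a strictly positive two-vector, so this ratio is pinched between two positive constants; the clean bound stated in the theorem is recovered either by normalising $h$ so that $h_{Z(0)}=\min(h_{\text{act}},h_{\text{iact}})$, i.e.\ starting the chain in the state that minimises $h$, or by absorbing the harmless bounded prefactor into the constant in front of $e^{-\theta_{\text{mkv}}\sigma}$. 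The analyticity and hence smoothness of $\xi(\theta)$ on $[0,\lambda_{\text{iact}})$, implicitly used both for differentiating at $0$ and for the Perron--Frobenius argument, follow from the corresponding property of the matrix-valued map $\theta\mapsto T_\theta$.
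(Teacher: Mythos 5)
Your proposal is correct and follows essentially the same route as the paper: the eigenvector-weighted exponential martingale built from $T_{\theta}^{K/k}h=\xi(\theta)^{K/k}h$, followed by the same stopping-time and optional-stopping argument as in Theorem~\ref{tm:iair}. You are in fact slightly more careful than the paper in handling the residual eigenvector ratio $h_{Z(NK/k)}/h_{Z(0)}$ (a point the paper's proof silently elides by deferring to ``the same kind of lines'' as Theorem~\ref{tm:iair}), and your resolution via the initial-state normalisation is the standard and correct one.
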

\begin{proof}
  Proceeding similarly as in the proof of Theorem~\ref{tm:iair}, define the process $M(n)$ by
 \begin{equation*}
  M(n):=h_{Z(n\frac{K}{k}-1)}e^{\theta_{\text{mkv}}\left(\sum_{i=1}^{n} \tilde{x}_i-\sum_{i=1}^{n-1} \tilde{t}_i\right)}~.
 \end{equation*}
 $M(n)$ is a martingale: By induction over $\frac{K}{k}-1$ one shows that:
 \begin{align*}
  \E & \left[e^{-\theta_{\text{mkv}}\tilde{t}_{n+1}}\;\middle|\;Z\left(n\frac{K}{k}-1\right)\right]\\
  & = \left(T_{\theta_{\text{mkv}}}^{\frac{K}{k}}\right)_{Z((n\frac{K}{k}-1)),\text{iact}} + \left(T_{\theta_{\text{mkv}}}^{\frac{n}{l}}\right)_{Z(n\frac{K}{k}-1),\text{act}}~.
 \end{align*}
  Now:
 \begin{align*}
  \E & \left[h_{Z((n+1)\frac{K}{k}-1)}e^{\theta_{\text{mkv}}\left(\tilde{x}_{n+1}-\tilde{t}_{n}\right)}\;\middle|\; Z\left(n\frac{K}{k}-1\right)=\text{act}\right]\\
  & =  \E\left[e^{\theta_{\text{mkv}}\min_{j\le k}\{x_{n,j}\}}\right]\left(T_{\theta_{\text{mkv}}}^{\frac{K}{k}}h\right)_{\text{act}}\\
  & = \E\left[e^{\theta_{\text{mkv}}\min_{j\le k}\{x_{n+1,j}\}}\right]\;\xi^{\frac{K}{k}}(\theta_{\text{mkv}})h_{\text{act}}\\
  & = h_{\text{act}}~,
 \end{align*}
 and similarly one obtains:
 \begin{equation*}
  \E\left[h_{Z((n+1)\frac{K}{k}-1)}e^{\theta_{\text{mkv}}\left(\tilde{x}_{n+1}-\tilde{t}_{n}\right)}\;\middle|\; Z\left(n\frac{K}{k}-1\right)=\text{iact}\right] = h_{\text{iact}}~,
 \end{equation*}
 so that:
 \begin{equation*}
  \E\left[h_{Z((n+1)\frac{K}{k}-1)}e^{\theta_{\text{mkv}}\left(\tilde{x}_{n+1}-\tilde{t}_{n}\right)}\;\middle|\; Z\left(n\frac{K}{k}-1\right)\right] = h_{Z(n)}~.
 \end{equation*}
 Now multiply both sides by $e^{\theta_{\text{mkv}}\left(\sum_{i=1}^{n} \min_{j\le k}\{x_{i,j}\}-\sum_{i=1}^{n-1} t_i\right)}$. The proof completes along the same kind of lines as in the proof of Theorem~\ref{tm:iair}.
\end{proof}

\subsection{Independent Arrivals, Correlated Replication}\label{subsec:iacr}
We now address the more realistic scenario when the replicas
$x_{i,j}$ are no longer independent; we consider the following
correlation model (from \cite{Joshi14}):
\begin{equation}
\label{eq:cormod}
 x_{i,j}=\delta y_i + \left(1-\delta\right) y_{i,j}~,
\end{equation}
where the random variables $y_i$ and $y_{i,j}$ are i.i.d. Here,
the parameter $\delta$ describes the degree of correlation amongst
the replicas: $\delta=0$ corresponds to the i.i.d. case from
Section~\ref{subsec:iair}, whereas for $\delta=1$ the $K$ servers
are entirely synchronized so that no replication gain is achieved.

For simplicity the interarrival times $t_i$ are first assumed to
be i.i.d. as in Section~\ref{subsec:iair}.
\begin{theorem}
\label{tm:iacr}
 Let $\theta_{\text{cor}}$ be defined by
 \begin{align*}
  \theta_{\text{cor}}:=\sup\bigg\{\theta\ge 0\;\bigg|\;\E\left[e^{\theta\delta y_i}\right] & \E\left[e^{\theta\left(1-\delta\right)\min_{j\le k}\{y_{i,j}\}}\right]\\
  &\E\left[e^{-\theta t_i}\right]^{\frac{K}{k}}\le 1\bigg\}~.
 \end{align*}
Then the following bound on the response time holds for all
$\sigma\ge 0$:
 \begin{equation*}
  \P(r\ge\sigma)\le \E\left[e^{\delta\theta_{\text{cor}} y_i}\right]\E\left[e^{\left(1-\delta\right)\theta_{\text{cor}}\min_{j\le k}\{y_{i,j}\}}\right]e^{-\theta_{\text{cor}}\sigma}~.
 \end{equation*}
\end{theorem}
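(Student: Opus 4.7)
The plan is to mirror the martingale argument used in the proof of Theorem~\ref{tm:iair}, with the decomposition from Eq.~(\ref{eq:cormod}) supplying the additional algebra. The pivotal observation is that $\delta y_i$ is a common additive shift across the $k$ replicas of job $i$, so that
\begin{equation*}
\min_{j\le k}\{x_{i,j}\} = \delta y_i + (1-\delta)\min_{j\le k}\{y_{i,j}\}~.
\end{equation*}
Since the $y_i$, the $y_{i,j}$, and the $t_i$ are mutually independent, this decomposition will make the relevant exponential moment factor into exactly the three pieces appearing in the definition of $\theta_{\text{cor}}$.

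Next I would imitate the earlier proof verbatim: define
\begin{equation*}
M(n+1) := e^{\theta_{\text{cor}}\left(\sum_{i=1}^{n+1}\min_{j\le k}\{x_{i,j}\} - \sum_{i=1}^n \tilde{t}_i\right)}
\end{equation*}
and check the martingale property by conditioning on $M(1),\ldots,M(n)$. Using the identity above together with the mutual independence, the conditional expectation of the one-step multiplicative increment factors as
\begin{equation*}
\E\left[e^{\theta_{\text{cor}}\delta y_{n+1}}\right]\E\left[e^{\theta_{\text{cor}}(1-\delta)\min_{j\le k}\{y_{n+1,j}\}}\right]\E\left[e^{-\theta_{\text{cor}} t_n}\right]^{K/k}~,
\end{equation*}
which equals $1$ by the very definition of $\theta_{\text{cor}}$. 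Positivity of $\theta_{\text{cor}}$ under the stability condition from Eq.~(\ref{eq:stability}) follows by the same derivative-at-zero calculation as in the remark after Theorem~\ref{tm:iair}, since at $\theta=0$ the derivative reduces to $\E[\delta y_1 + (1-\delta)\min_j y_{1,j}] - \frac{K}{k}\E[t_1] = \E[\min_j x_{1,j}] - \frac{K}{k}\E[t_1] < 0$.

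Finally, with the stopping time
\begin{equation*}
N := \min\left\{n\ge 0 \;\middle|\; \sum_{i=1}^n \min_{j\le k}\{x_{i,j}\} - \sum_{i=1}^{n-1}\tilde{t}_i \ge \sigma\right\}
\end{equation*}
and the observation $\{N<\infty\}=\{r\ge\sigma\}$, applying the optional stopping theorem to $M(N\wedge l)$ and letting $l\to\infty$ delivers the claimed tail bound. The leading constant $\E[e^{\delta\theta_{\text{cor}} y_i}]\E[e^{(1-\delta)\theta_{\text{cor}}\min_{j\le k}\{y_{i,j}\}}]$ emerges as $\E[M(1)]$, again by the independence of $y_1$ and $(y_{1,j})_j$. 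The only step beyond a mechanical imitation of the earlier proof is the minimum-of-sums decomposition; once that identity is in hand the martingale machinery goes through without further obstruction, and I expect no genuine difficulty in completing the argument.
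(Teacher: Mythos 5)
Your proposal is correct and follows exactly the route the paper intends: the paper's own proof is just the remark that the argument is entirely analogous to that of Theorem~\ref{tm:iair}, and you supply precisely the one ingredient that makes the analogy work, namely $\min_{j\le k}\{x_{i,j}\}=\delta y_i+(1-\delta)\min_{j\le k}\{y_{i,j}\}$ and the resulting factorization of the exponential moment. Nothing further is needed.
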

\begin{proof}
 Entirely analogous to the proof of Theorem~\ref{tm:iair}.
\end{proof}

To illustrate the impact of the correlation parameter $\delta$ we
consider the special case when $y_i$ and $y_{i,j}$ are
exponentially distributed with parameter $\mu$. Clearly,
\begin{equation*}
 \min_{j\le k}\{y_{i,j}\}\sim\text{Exp}(k\mu)~,
\end{equation*}
so that $\theta_{\text{cor}}>0$ is the solution of
\begin{equation*}
 \frac{\mu}{\mu-\delta\theta}\;\frac{k\mu}{k\mu-\left(1-\delta\right)\theta}\;\frac{\lambda}{\lambda+\theta}=1~.
\end{equation*}

\begin{figure}[t]
\begin{center}
\includegraphics[width=0.48\textwidth]{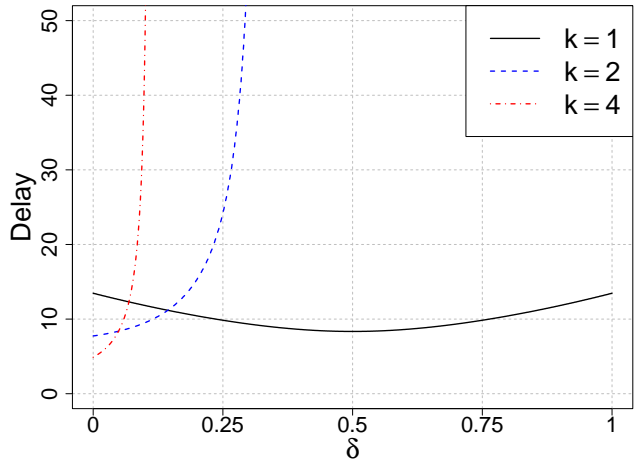}
\caption{Delay for the $99\%$-percentile as a function of the
degree of correlation $\delta$ ($\lambda=4*0.75$, $\mu=1$, $K=4$,
    $k=1,2,4$)} \label{fig:deltaplotdelay}
\end{center}
\end{figure}

Further, Figure~\ref{fig:deltaplotdelay} illustrates the
$99\%$-percentile of the delay as a function of the degree of
correlation $\delta$ for several numbers of replicas $k$. Strictly
from the point of view of the stability region, as it was also
considered in Section~\ref{sec:stability}, we observe that
replication (both $k=2$ and $k=4$) is detrimental as the
corresponding systems quickly become unstable. In contrast, from
the point of view of delays, replication can be beneficial within
a subset of the corresponding stability region notwithstanding its
strict inclusion in the stability region of the non-replicated
system. This fundamental observation can be intuitively explained
in that for larger values of the degree of correlation $\delta$,
the servers become more synchronized and consequently no
significant \emph{replication gain} can be achieved; a further
follow-up discussion concerning a convergence result depending on
$\delta$ will be given in Section~\ref{sec:forkjoin}. As a side
remark, the symmetry in the delay for $k=1$ is due to the
underlying Erlang distribution, which minimizes its variance at
$\delta=.5$.

\subsection{Markovian Arrivals, Correlated Replication}\label{subsec:macr}

\begin{figure*}[t]
\centering
    \subfloat[\label{fig:boundiid}{Poisson arrivals, independent exponential replication (Theorem~\ref{tm:iair}, $\lambda=4\ast 0.75$, $\mu=1$)}]{%
      \includegraphics[width=0.45\textwidth]{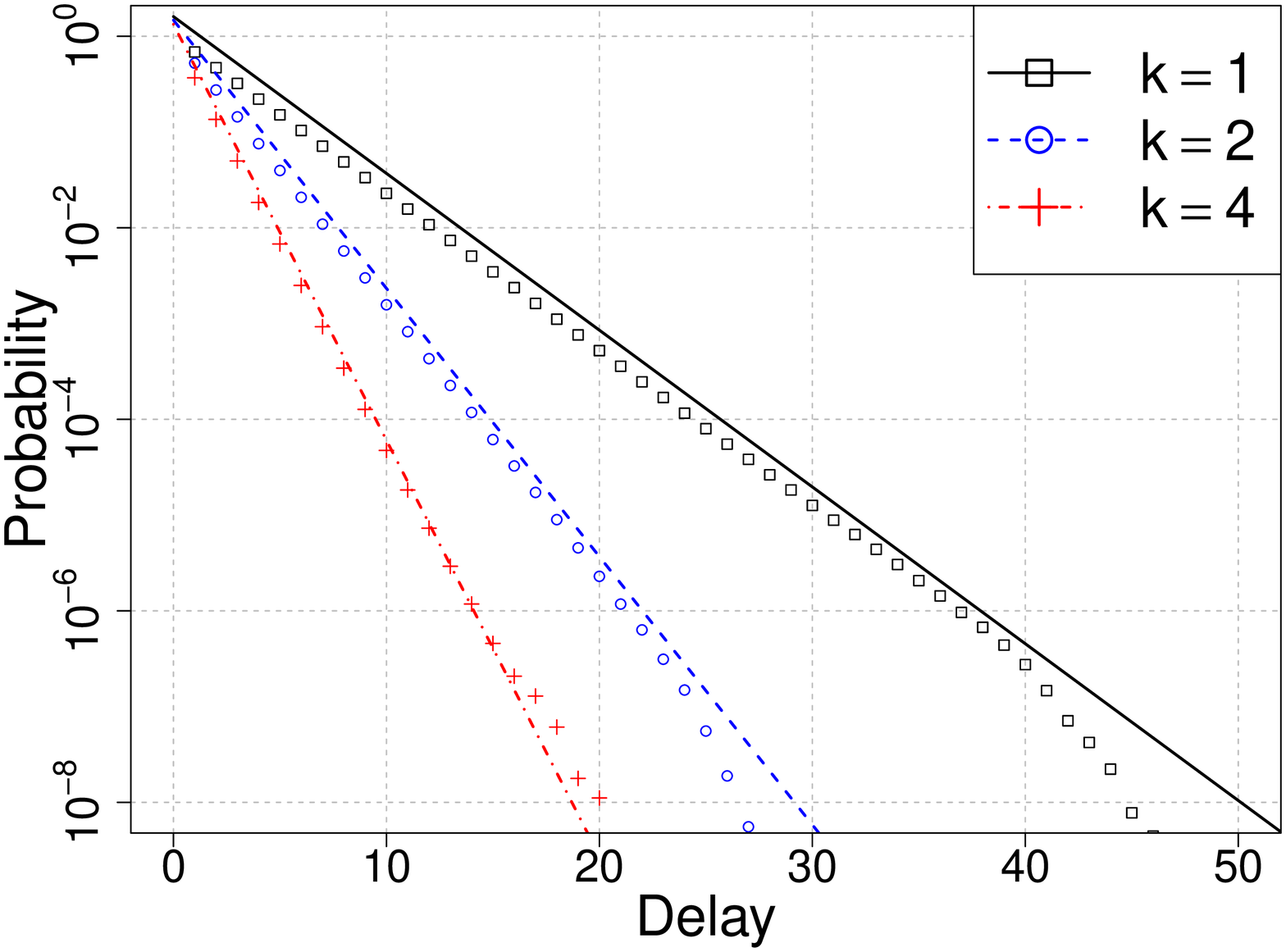}
    }
    \hspace{40pt}
    \subfloat[\label{fig:boundcor}{Poisson arrivals, correlated exponential replication (Theorem~\ref{tm:iacr}, $\lambda=4\ast 0.75$, $\delta=0.5$, $\mu':=\delta k+\left(1-\delta\right)$)}]{%
      \includegraphics[width=0.45\textwidth]{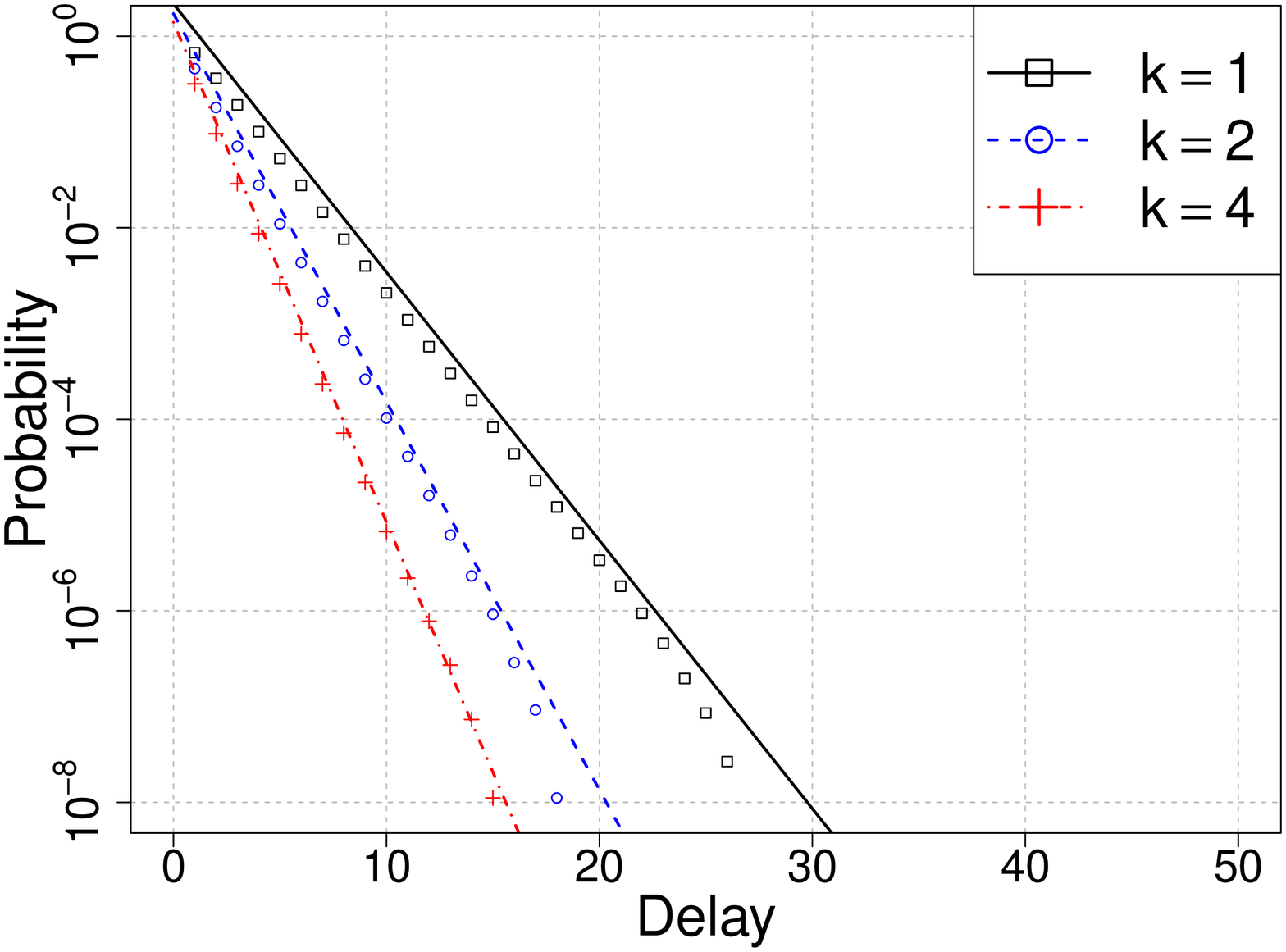}
    }

    \subfloat[\label{fig:boundmarkov}{Markovian arrivals, independent exponential replication (Theorem~\ref{tm:mair}, $p=0.1$, $\lambda_{\text{iact}}=0.3$, $\lambda_{\text{act}}=30$, $\mu=1$)}]{%
      \includegraphics[width=0.45\textwidth]{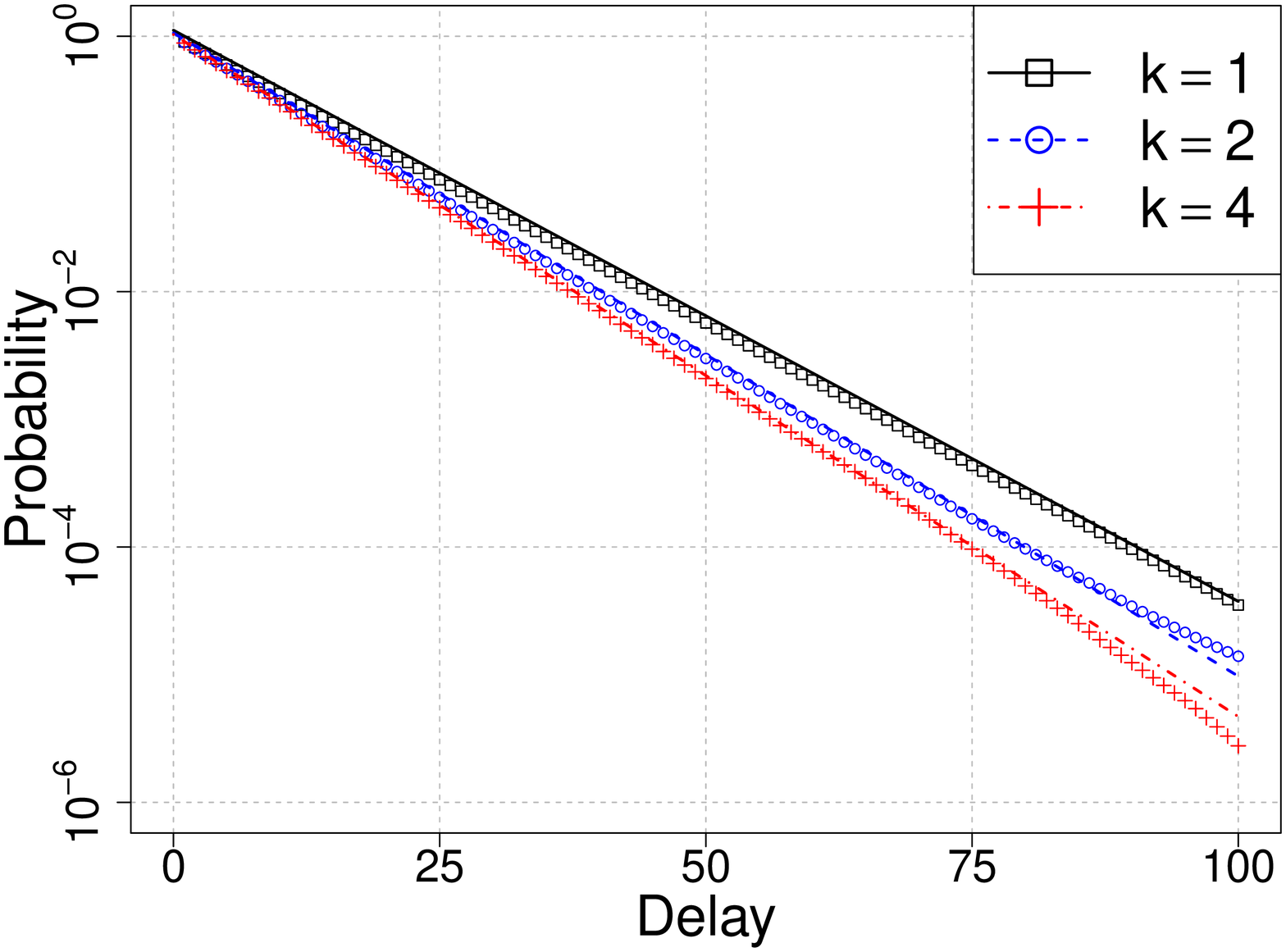}
    }
    \hspace{40pt}
    \subfloat[\label{fig:boundcormar}{Markovian arrivals, correlated exponential replication (Theorem~\ref{tm:macr}, $p=0.1$, $\lambda_{\text{iact}}=0.3$, $\lambda_{\text{act}}=30$, $\mu=1$, $\delta=0.5$, $\mu':=\delta k+\left(1-\delta\right)$)}]{%
      \includegraphics[width=0.45\textwidth]{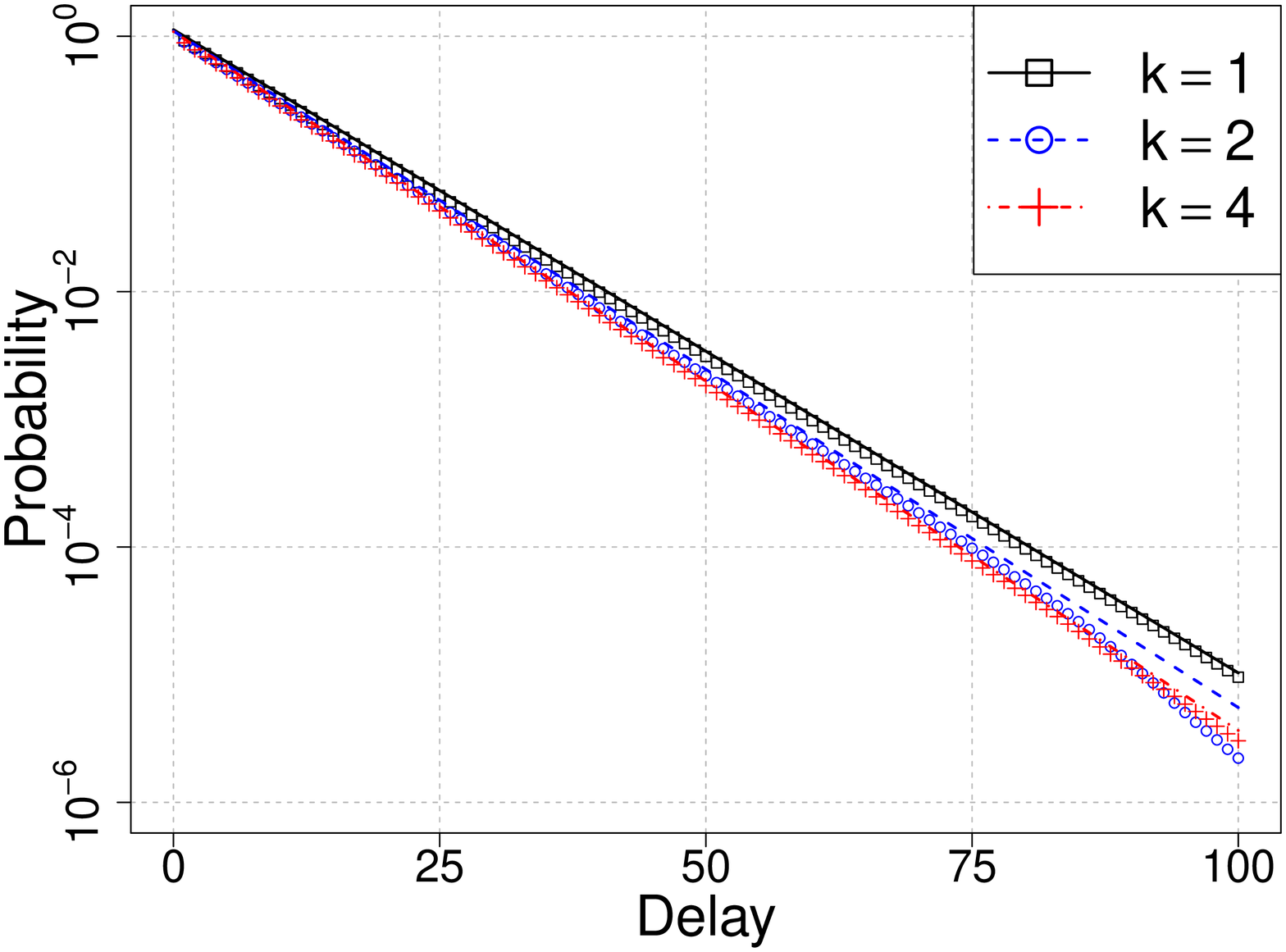}
    }
    \caption{Stochastic bounds vs. simulation results accounting for $10^9$ packets ($K=4$, $\rho=0.75$, $k=1,2,4$)}
    \label{fig:markov}
  \end{figure*}

We briefly state the results for the combination of the scenario
from Sections~\ref{subsec:mair} and \ref{subsec:iacr}:

\begin{theorem}
\label{tm:macr}
 With the same notation as in Sections~\ref{subsec:mair} and \ref{subsec:iacr}, let $\theta_{\text{mkv,cor}}$ be defined by
 \begin{align*}
  \theta_{\text{mkv,cor}}:=\sup\bigg\{\theta\ge 0\;\Big|\; & \E\left[e^{\theta\delta y_i}\right]
   \E\left[e^{\theta\left(1-\delta\right)\min_{j\le k}\{y_{i,j}\}}\right]\\
   &\xi^{\frac{K}{l}}(\theta)\le 1\bigg\}~.
 \end{align*}
 Then the following bound on the response time holds for all $\sigma\ge 0$:
 \begin{equation*}
  \P(r\ge\sigma)\le \E\left[e^{\delta\theta_{\text{mkv,cor}} y_i}\right]\E\left[e^{\left(1-\delta\right)\theta_{\text{mkv,cor}}\min_{j\le k}\{y_{i,j}\}}\right]e^{-\theta_{\text{mkv,cor}}\sigma}~.
 \end{equation*}
\end{theorem}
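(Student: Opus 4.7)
The plan is to combine the two martingale constructions already deployed: the eigenvector-weighted, Markov-modulated martingale from the proof of Theorem~\ref{tm:mair}, together with the factorization of the minimum-service MGF that drove the proof of Theorem~\ref{tm:iacr}. The Lindley-style representation of the response time in Eq.~(\ref{eq:responsetime}) continues to apply verbatim, so the skeleton of the argument (construct a unit-mean exponential martingale, introduce a level-crossing stopping time, and apply optional stopping) is reused from the earlier proofs; only the driving increments are updated.

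Concretely, I would introduce
\begin{equation*}
M(n) := h_{Z(n\frac{K}{k}-1)}\exp\left(\theta_{\text{mkv,cor}}\left(\sum_{i=1}^{n}\min_{j\le k}\{x_{i,j}\} - \sum_{i=1}^{n-1}\tilde{t}_i\right)\right),
\end{equation*}
where $x_{i,j}=\delta y_i+(1-\delta)y_{i,j}$ and $(h,\xi)$ is the Perron eigen-pair of $T_{\theta_{\text{mkv,cor}}}$ from Section~\ref{subsec:mair}. To verify the martingale property, I would condition on the chain state $Z(n\frac{K}{k}-1)$ and on the accumulated increments, and compute the one-step ratio $\E[M(n+1)/M(n)\mid\cdot]$. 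By the mutual independence of $\{y_i\}$, $\{y_{i,j}\}$ and $Z(\cdot)$, the service contribution decouples into
\begin{equation*}
\E\left[e^{\theta\min_{j\le k}\{x_{i,j}\}}\right] = \E\left[e^{\theta\delta y_i}\right]\E\left[e^{\theta(1-\delta)\min_{j\le k}\{y_{i,j}\}}\right],
\end{equation*}
while the arrival contribution aggregated over the $K/k$ original interarrivals making up one batch yields the factor $\xi^{K/k}(\theta_{\text{mkv,cor}})$ via exactly the same induction on transitions of $T_{\theta_{\text{mkv,cor}}}$ as in the proof of Theorem~\ref{tm:mair}. The defining equation for $\theta_{\text{mkv,cor}}$ then pins the product of these three factors to $1$.

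With the martingale property in hand, I would define
\begin{equation*}
N := \min\left\{n\ge 0\;\middle|\;\sum_{i=1}^{n}\min_{j\le k}\{x_{i,j}\} - \sum_{i=1}^{n-1}\tilde{t}_i\ge\sigma\right\},
\end{equation*}
note via Eq.~(\ref{eq:responsetime}) that $\{N<\infty\}=\{r\ge\sigma\}$, and apply the optional stopping theorem to $M(N\wedge l)$ as $l\to\infty$ exactly as in the proofs of Theorems~\ref{tm:iair} and~\ref{tm:mair}. The step I expect to be most delicate is the bookkeeping of the eigenvector weights $h_{\text{act}}, h_{\text{iact}}$, which appear on both sides of the stopping inequality and must be absorbed into a uniform constant so that the stated prefactor contains only the service MGFs and no explicit $h$; this is handled by the same normalization convention implicit in Theorem~\ref{tm:mair} (e.g., rescaling so that $\min\{h_{\text{act}},h_{\text{iact}}\}=1$) and does not affect the decay rate $\theta_{\text{mkv,cor}}$, which is the actual content of the bound.
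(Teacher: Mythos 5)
Your proposal is correct and follows essentially the same route as the paper, whose entire proof of Theorem~\ref{tm:macr} is the single remark that it is ``entirely analogous'' to the proofs of Theorems~\ref{tm:iair} and~\ref{tm:mair}; you have simply made that combination explicit, with the eigenvector-weighted Markov martingale supplying the $\xi^{K/k}(\theta)$ factor and the decomposition $x_{i,j}=\delta y_i+(1-\delta)y_{i,j}$ supplying the product of the two service MGFs. Your closing remark about absorbing the eigenvector weights $h_{\text{act}}, h_{\text{iact}}$ into the prefactor is a fair observation about a detail the paper itself glosses over in both Theorems~\ref{tm:mair} and~\ref{tm:macr}, and your treatment of it is consistent with the paper's implicit convention.
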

\begin{proof}
 Entirely analoguous to the proofs of Theorems~\ref{tm:iair} and \ref{tm:mair}.
\end{proof}

To numerically compare our stochastic bounds from
Theorems~\ref{tm:iair},~\ref{tm:iacr},~\ref{tm:mair},
and~\ref{tm:macr} to simulation results we refer to
Figures~\ref{fig:markov}(a)-(d), respectively. In all four
scenarios, addressing combinations of independent/correlated
arrivals and replications, jobs are replicated to $k=1,2,4$ out of
a total number of $K=4$ servers. The parameters of the respective
models are chosen such that the (server) utilization remains
constant, i.e., $\rho=0.75$. In particular, in
Figure~\ref{fig:boundiid}, both the interarrival- and service
times are exponentially distributed with parameters $\lambda=4\ast
0.75$ and $\mu=1$. In Figure~\ref{fig:boundcor}, the interarrival
times are again exponential with $\lambda=4\ast 0.75$, the
correlation factor is $\delta=0.5$, whereas the components $y_i$
and $y_{i,j}$ of the service times $x_{i,j}$ from
Eq.~(\ref{eq:cormod}) are exponential with parameter
\begin{equation*} \mu' := \delta + \left(1-\delta\right)\big/ k~,
\end{equation*} such that $\E[x_{i,j}]=1$. In
Figure~\ref{fig:boundmarkov}, the parameters for the Markov chain
are $p=0.1$, $\lambda_{\text{act}}=30$,
$\lambda_{\text{iact}}=0.3$, whereas the services times are
exponential with parameter $\mu=1$. According to
Eq.~(\ref{eq:averaget}) the average of the interarrival times is
$E[t_i]=1/3$, such that $\rho=0.75$. Finally, in
Figure~\ref{fig:boundcormar}, the parameters for the service times
from Figure~\ref{fig:boundcor} are combined with the parameters
for the interarrival times from Figure~\ref{fig:boundmarkov}. We
remark that in all four scenarios the stochastic bounds from
Theorems~\ref{tm:iair},~\ref{tm:iacr},~\ref{tm:mair},
and~\ref{tm:macr} are remarkably accurate.

\section{Applications}\label{sec:app}
In this section we present two practical applications of our
theoretical framework. The first concerns integrating replication
with a fork-join queueing model; a major outcome is the
construction of an intuitive class of assignment policies which
can fundamentally improve response times. The second investigates
the analytical tradeoff between resource usage and response times,
an issue which was subject to several measurement studies
involving Google and Bing traces.

\subsection{Fork-Join with Replication (FJR)}\label{sec:forkjoin}

In this section we consider replication in the context of a
fork-join (FJ) queueing system. In a FJ system, arriving jobs are
split into $K$ different tasks which are mapped to $K$ servers to
be processed independently. A job is considered finished once
\emph{all} of its corresponding tasks have finished. We consider
the special case of a \emph{blocking} system whereby jobs cannot
be forked before all of the tasks of the previous job have left
the system (this mode is in particular characteristic to Hadoop,
through a particular coordination service~\cite{White09}).

The obvious drawback of this blocking model is that it is no
longer work-conserving: servers can become idle once some but not
all tasks of one job are complete. Moreover, the stability
condition of the system becomes a function of the number of
servers.

Consider for instance the case of Poisson arrivals with rate
$\lambda$ and exponential and identically distributed service
times $x_i$, $i=1,\dots,K$, with rate $\mu$. As the distribution
of the maximum of i.i.d. exponential random variables satisfies
$\max_{i=1}^K
x_i=_{\mathcal{D}}\sum_{i=1}^K\frac{x_i}{i}$~\cite{renyi53}, the
stability condition is roughly
\begin{equation}
\frac{\lambda}{\mu}\ln K<1~.\label{eq:fjstab}
\end{equation}

To overcome the issue of decaying stability regions (in the number
of servers $K$) we propose the following task assignment policy
which suitably triggers replicas on top of the standard FJ model.

\noindent{\textit{Policy FJR (Fork-Join with Replication)}: Once a
server finishes its task, it immediately replicates a remaining
task from another running server. When either the original task or
one of its replica has finished, the others are immediately
purged.}

FJR can be regarded as a concrete implementation of backup-tasks
in MapReduce (which is not explicitly presented in the original
MapReduce description~\cite{Dean08}). Our policy is quite flexible
in that the executing task to be replicated can be chosen randomly
(yet independently of the current state); moreover, as multiple
servers can become idle at the same time (due to the underlying
purging model), each can replicate any executing tasks.
Intuitively, this flexibility is due to the underlying assumption
of exponentially distributed and independent service times.

The main result of the FJR policy is the following:

\begin{theorem}
\label{tm:fjrstab}
 The \emph{overall} service time $x$ of jobs processed by FJR follows an $\text{Erlang}(K,K\mu)$-distribution. Consequently, the corresponding stability condition is
  \begin{equation*}
  \frac{\lambda}{\mu}<1~.
 \end{equation*}
\end{theorem}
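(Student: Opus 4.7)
The plan is to show that between any two successive task completions within a single job, an exponential clock with rate $K\mu$ governs the next completion, so that the total service time decomposes as a sum of $K$ i.i.d.\ $\text{Exp}(K\mu)$ random variables.

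First I would establish the invariant that, at every instant strictly before the job finishes, all $K$ servers are busy working on some (not necessarily distinct) task belonging to the job. At the fork, this holds because the job is split into exactly $K$ tasks, one per server. Now suppose the invariant holds and one of the $m\le K$ remaining tasks completes: if that task was being worked on by $k_0\ge 1$ servers in parallel, then by the purging rule all $k_0$ of these servers become idle simultaneously. Provided $m\ge 2$, the FJR rule immediately assigns each of these idle servers to replicate one of the $m-1$ running tasks (choosing arbitrarily when $k_0>m-1$), so all $K$ servers are again busy. When $m=1$, the completion of that single task ends the job, and the invariant is no longer needed.

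Next I would translate this invariant into a distributional statement using the memorylessness of the exponential. Conditioned on the state right after a replication step, each of the $K$ servers has a residual service time distributed as $\text{Exp}(\mu)$, independently of everything else. A running task with $k_i$ servers working on it completes at rate $k_i\mu$, and since $\sum_i k_i=K$ the overall rate at which \emph{some} task completes is $K\mu$; equivalently, the time to the next task completion is distributed as $\min$ of $K$ i.i.d.\ $\text{Exp}(\mu)$'s, which is $\text{Exp}(K\mu)$. Crucially, this holds \emph{regardless} of the multiplicity pattern $(k_1,\ldots,k_m)$, so the same $\text{Exp}(K\mu)$ law governs every inter-completion gap. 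Iterating, the total service time $x$ of the job is the sum of $K$ such i.i.d.\ gaps, i.e.\ $x\sim\text{Erlang}(K,K\mu)$.

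Finally, for the stability claim, $\E[x]=K/(K\mu)=1/\mu$, and classical results for $M/G/1$ queues (which apply here because the FJ blocking system serves jobs one at a time with i.i.d.\ service times $x$) give the stability condition $\lambda\,\E[x]<1$, i.e.\ $\lambda/\mu<1$, as required. The main obstacle is the invariant in step one: one must be careful about the simultaneous freeing of multiple servers (through purging) and verify that the FJR rule actually succeeds in keeping all $K$ servers busy in every intermediate state; once this bookkeeping is done, the memoryless property makes the distributional identity essentially immediate.
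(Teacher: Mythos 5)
Your proposal is correct and follows essentially the same route as the paper: both decompose the job's service time into the $K$ successive inter-completion gaps $y_i-y_{i-1}$, argue that at each stage all $K$ servers carry a fresh (by memorylessness) $\text{Exp}(\mu)$ residual so that each gap is the minimum of $K$ independent exponentials, i.e.\ $\text{Exp}(K\mu)$, and sum to obtain the $\text{Erlang}(K,K\mu)$ law. The only difference is presentational: you phrase the memorylessness step as an invariant about all $K$ servers staying busy (with explicit care for multiply-replicated tasks and simultaneous purges), whereas the paper verifies the same fact by an explicit conditional integral over the elapsed service times $\vec{s}$.
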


\begin{proof}
Let $y_1 < y_2 < \ldots < y_K$ denote the times where the tasks
(original or replica) finish (see Figure~\ref{fig:forkjoin}).
Obiviously, it holds $x=y_K$. We first show (with the convention
$y_0\equiv 0$) that the family
\begin{equation*}
\left\{y_i-y_{i-1} \; \middle| \; i\ge 1\right\}
\end{equation*}
is independent and identically exponentially distributed with
parameter $K \mu$.

For $i=1$, this follows directly from the well known fact that the
minimum over $K$ independent, exponential random variables with
rate $\mu$ is exponentially distributed with rate $K\mu$.

Now, suppose $1\le l\le K$ tasks finish, or are purged, at time
$y_i$. Denote by $z_1,\ldots,z_l$ the corresponding service times
of the respective replicas starting at $y_i$. For the remaining
$K-l$ servers, denote by $z_{l+1},\ldots,z_K$ the service times of
the current tasks and by $s_{l+1},\ldots,s_{K}$ the length of time
they started before $y_i$. Now we can write
\begin{align*}
 y_{i+1}-y_i = \min \big\{ z_1, \ldots, z_l, z_{l+1}-s_{l+1}, \ldots, z_{K}-s_K\big| z_{l+1}-s_{l+1},\ldots,z_{K}-s_K>0\big\}~.
\end{align*}
Note that the family $\{z_1,\ldots,z_K\}$ is independent from one
another and from $\{s_{l+1},\ldots s_K\}$.

\begin{figure}[t]
\begin{center}
\begin{tikzpicture}[very thick]
    \draw (0,3) node[left,xshift=-0.2cm]{$\textrm{Server 1: }$};
    \draw (0,2) node[left,xshift=-0.2cm]{$\textrm{Server 2: }$};
    \draw (0,1) node[left,xshift=-0.2cm]{$\textrm{Server 3: }$};
    \draw (0,0) node[left,xshift=-0.2cm]{$\textrm{Server 4: }$};

    \draw[color=black] (0,3) -- (1.3,3);
    \draw[color=red] (1.3,3.05) -- (2.7,3.05);
    \draw[color=red, dotted] (2.7,3.05) -- (3.2,3.05);
    \draw[color=blue] (2.7,3.1) -- (3.7,3.1);
    \draw[color=black] (0,3.11) -- (0,2.89);

    \draw[color=red] (0,2) -- (2.7,2);
    \draw[color=red, dotted] (2.7,2) -- (4.3,2);
    \draw[color=blue] (2.7,2.05) -- (3.7,2.05);
    \draw[color=blue, dotted] (3.7,2.05) -- (4.7,2.05);
    \draw[color=black] (0,2.11) -- (0,1.89);

    \draw[color=teal] (0,1) -- (2,1);
    \draw[color=red] (2,1.05) -- (2.7,1.05);
    \draw[color=blue] (2.7,1.1) -- (3.7,1.1);
    \draw[color=blue, dotted] (2.7,1.1) -- (4.1,1.1);
    \draw[color=black] (0,1.11) -- (0,0.89);

    \draw[color=blue] (0,0) -- (3.7,0);
    \draw[color=blue, dotted] (3.7,0) -- (5,0);
    \draw[color=black] (0,0.11) -- (0,-0.11);

    \draw[color=gray,dashed, thick] (0,3.5) -- (0,-0.5);
    \draw (0,-0.5) node[below]{$y_0$};
    \draw[color=gray,dashed, thick] (1.3,3.5) -- (1.3,-0.5);
    \draw (1.3,-0.5) node[below]{$y_1$};
    \draw[color=gray,dashed, thick] (2,3.5) -- (2,-0.5);
    \draw (2,-0.5) node[below]{$y_2$};
    \draw[color=gray,dashed, thick] (2.7,3.5) -- (2.7,-0.5);
    \draw (2.7,-0.5) node[below]{$y_3$};
    \draw[color=gray,dashed, thick] (3.7,3.5) -- (3.7,-0.5);
    \draw (3.7,-0.5) node[below]{$y_4$};
    \end{tikzpicture}
 \caption{FJR policy; different colors denote different tasks, dotted lines indicate tasks which have been purged.}\label{fig:forkjoin}
\end{center}
\end{figure}
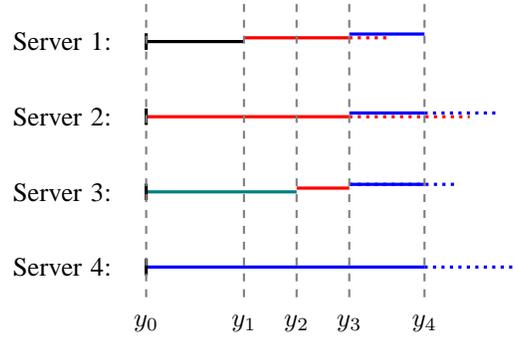

Now, with
\begin{equation*}
A:=\{z_{l+1}-s_{l+1},\ldots,z_{K}-s_K>0\}~,
\end{equation*}
$\vec{s}:=\left(s_{l+1},\ldots,s_{K}\right)$, and $f(.)$ the
common density of $\vec{s}$:
\begin{align*}
 \P (y_{i+1}-y_i\ge\sigma) & = \P\big(\min\left\{z_1,\ldots,z_l,z_{l+1}-s_{l+1}, \ldots, z_{K}-s_K\right\}\ge\sigma\big|A\big)\\
 & = e^{-l\mu\sigma}\int e^{-\mu\left(\sum_{j=l+1}^K\sigma+s_j\right)} f(\vec{s})d\vec{s}\Big/\P(A)\\
 & = e^{-K\mu\sigma}\int e^{-\mu\sum_{j=l+1}^K s_j} f(\vec{s})d\vec{s}\Big/\P(A)\\
 & = e^{-K\mu\sigma}\int \P(z_{l+1}>s_{l+1},\ldots,z_{K}>s_{K}) f(\vec{s})d\vec{s}\Big/\P(A)\\
 & = e^{-K\mu\sigma}~,
\end{align*}

so that $y_i-y_{i-1}$ is exponentially distributed for any $1\le
i\le K$. It follows that
\begin{equation*}
 x=y_K=\sum_{i=1}^K y_i-y_{i-1}
\end{equation*}
has an Erlang distribution with parameters $K$ and $K\mu$.
Therefore $E[x]=\frac{1}{\mu}$, which completes the proof.
\end{proof}

\begin{figure}[ht]
\begin{center}
\includegraphics[width=0.48\textwidth]{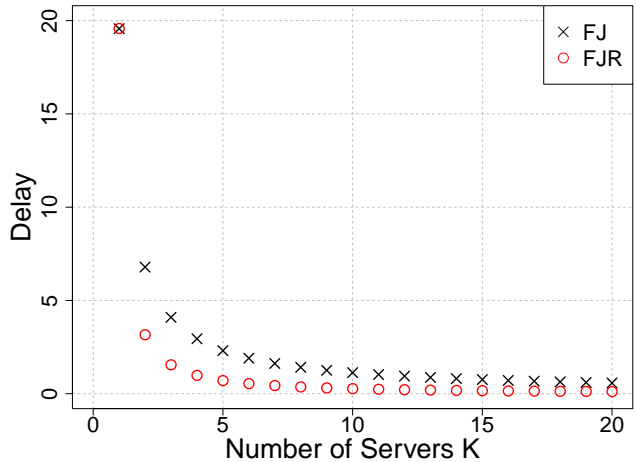}
\caption{Improving the $99\%$-percentile of delays in FJ systems
by replication} \label{fig:fjcompare}
\end{center}
\end{figure}

It is evident that the stability region of FJR improves the
stability region of the standard FJ queueing model (given in
Eq.~(\ref{eq:fjstab})) by a logarithmic factor. To further
visualize the numerical difference between FJR and FJ in the
actual delays, we first observe that the response time
distribution can be expressed as in Theorem~\ref{tm:iair} by
letting $k=K$ and replacing the `$\min$' by a `$\max$' (see
also~\cite{RiPoCi15} for explicit results).
Figure~\ref{fig:fjcompare} shows the $99^{\textrm{th}}$ percentile
of the delays as a function of $K$ ($\mu=1$ and Poisson arrivals
with rate such $\rho=0.75$ when $K=1$; the utilization
consequently decays for larger $K$). The numerical benefit of FJR
is that it roughly halves the FJ delays.

While the fundamental improvements achieved by the FJR policy,
relative to the standard FJ model, are remarkable, we point out
that they are mainly due to the exponential and independence
assumptions on the triggered replicas. Unfortunately, a clean
analysis in the case of correlated replicas (even of the form
$(1-\delta)x_i+\delta x$, with $x$ and $x_i$'s being exponentially
distributed) appears prohibitive. For this reason, we resort to
simulations to illustrate that the benefits of FJR (proven in the
ideal i.i.d. and exponential case) carry over to more practical
scenarios with correlated replicas.

Concretely, Figure~\ref{fig:fjrconv} shows the bounds on the delay
distributions for FJ and three FJR scenarios, depending on the
degree of correlation $\delta$ (the service times of an original
and its replicated tasks are $(1-\delta)x_i+\delta x$, with $x$
and $x_i$'s being exponentially distributed with rate $\mu=1$;
Poisson arrivals such that the utilization for FJ is $\rho=0.9$
(the corresponding utilizations for FJR are not analytically
determined)). The figure essentially illustrates the convergence
of FJR to FJ; we remark in particular that FJ is invariant to
$\delta$, whereas $FJR$ behaves identically as $FJ$ when
$\delta=1$ (i.e., when the replicas are identical to the
originals).

\subsection{Resource Usage vs. Response Times} For the second application we investigate the analytical
tradeoff between resource usage and response times under
replication. This application is motivated by empirical
observations from Google~\cite{Dean13} and Bing~\cite{Jalaparti13}
traces that a slight increase in the resource budget may yield
substantial reductions of the upper quantiles of response times.
For example,~\cite{Jalaparti13} reports that the
$99^{\textrm{th}}$ percentile of the delay improves by as much as
$40\%$ under a $5\%$ increase of the resource budget. To
compensate for the inherent increase of resource usage under
replication, the schemes from \cite{Dean13,Jalaparti13} defer the
execution time of the replicas until the original request has been
outstanding for a given \textit{replication offset} $\Delta$.

\begin{figure}[h]
\begin{center}
\includegraphics[width=0.48\textwidth]{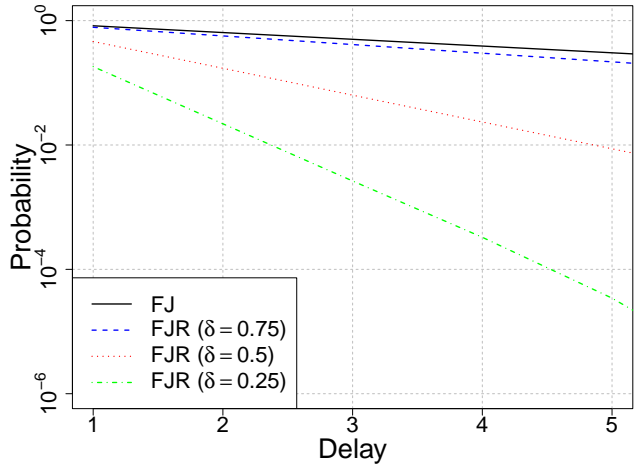}
\caption{Convergence of FJR to FJ in terms of the degree of
correlation $\delta$ ($K=4$).} \label{fig:fjrconv}
\end{center}
\end{figure}

\begin{figure}[t]
\begin{center}
\begin{tikzpicture}[thick]
    \draw (0,0) -- (2,0);
    \draw (0,-0.11) -- (0,0.11);
    \draw (2,-0.11) -- (2,0.11);
    \draw (2,0) -- (4,0);
    \draw (4,-0.11) -- (4,0.11);
    \draw (1,0) node[above]{$\Delta$};
    \draw (3,0) node[above]{$y$};
    \draw (0,1) -- (3,1);
    \draw (0,0.89) -- (0,1.11);
    \draw (3,0.89) -- (3,1.11);
    \draw (1.5,1) node[above]{$x$};
    \draw (0,0) node[left,xshift=-0.2cm]{$\textrm{Server 2: }$};
    \draw (0,1) node[left,xshift=-0.2cm]{$\textrm{Server 1: }$};
    \draw (4,0) node[right]{$\dots$};
    \draw (3,1) node[right]{$\dots$};
\end{tikzpicture}
 \caption{Replication with deferred execution times: a replica (at Server 2) may start no sooner than ($\Delta\geq0$) after the starting time of the original (at Server 1).}\label{fig:ReplicaDelta}
\end{center}
\end{figure}
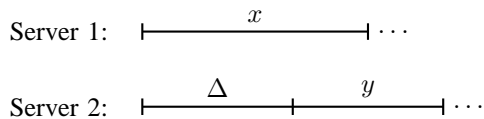

Consider a scenario with two servers. Jobs arrive with rate
$\lambda$ at the first server with interarrival times $t_i$ and
service times $x_i=_{\mathcal{D}} x$; if the processing time of a
job is larger than some fixed $\Delta$, then the job is replicated
at the second server with service times $y_i=_{\mathcal{D}} y$
(see Figure~\ref{fig:ReplicaDelta} for a time-line illustration of
a generic job with execution time $x$ and its replica, should
$x>\Delta$). Whenever either of the original job or its replica
finishes execution, the residual service time of the other is
cancelled (i.e., the purging replication model).

The utilization at the first server is thus given by
\begin{equation}
\rho_1=\lambda E\left[\min\{x,\Delta+y\}\right]~,\label{eq:rhos1}
\end{equation}
whereas the utilization at the second is
\begin{equation}
\rho_2=\lambda
E\left[\min\left\{|x-\Delta|,y\right\}\right]~.\label{eq:rhos2}
\end{equation}
We note that unlike previous models, where the utilization is
server independent, the current model is subject to different
server utilizations due to the lack of symmetry in dispatching the
load.

The measure for \textit{resource usage} is the total utilization
at the two servers and is denoted by $u$ to avoid confusion
\begin{equation*}
u:=\rho_1+\rho_2~.
\end{equation*}

\begin{figure*}[t]
\centering
    \subfloat[\label{fig:deltadelta25}{$\delta=0.25$}]{%
      \includegraphics[width=0.45\textwidth]{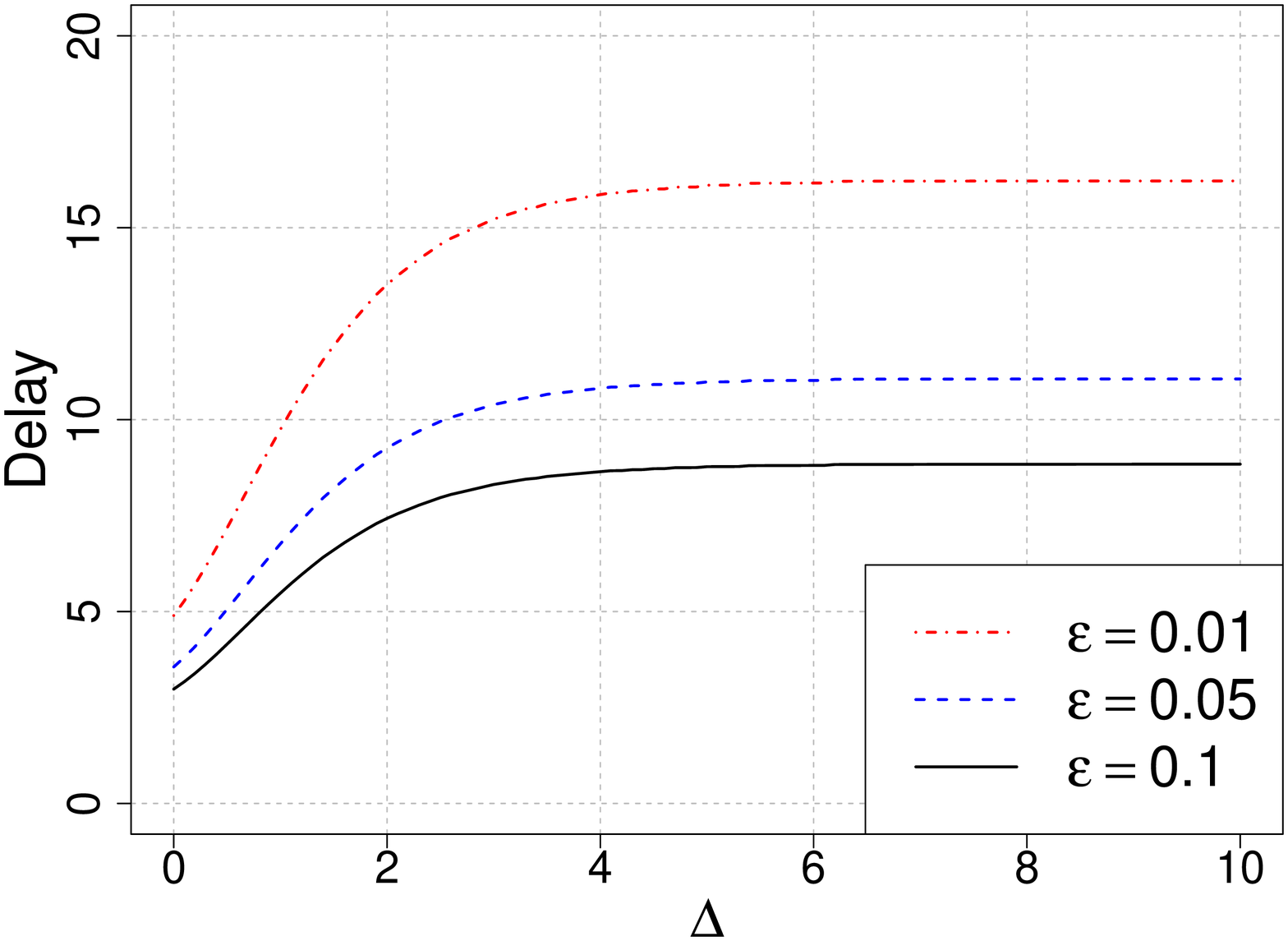}
    }
    \hspace{40pt}
    \subfloat[\label{fig:deltadelta75}{$\delta=0.75$}]{%
      \includegraphics[width=0.45\textwidth]{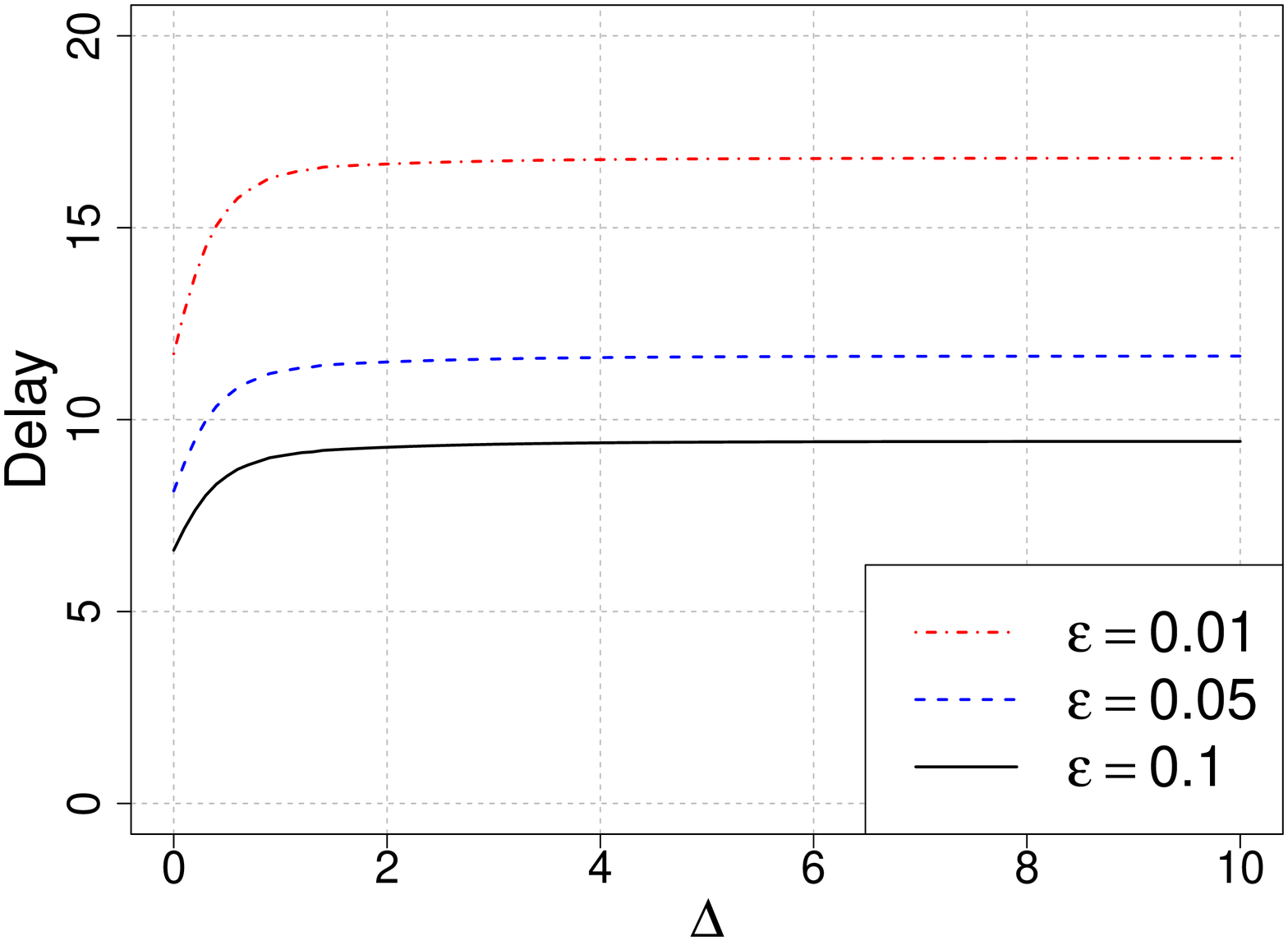}
    }
    \caption{Quantiles of the response time vs. the replication offset $\Delta$ ($\lambda=0.75$, $\mu=1$)}
    \label{fig:deltadelta}
\end{figure*}

Aiming for explicit results, we assume for convenience the
exponential service model, i.e., $x\sim\textrm{exp}(\mu)$ and
$y\sim\textrm{exp}(\mu)$, with $\mu=1$. Moreover, we consider both
the independent and correlated replication models.

\subsubsection{Independent Replication} Given the statistical
independence of $x_i$'s and $y_i$'s, straightforward computations
of integrals yield
\begin{eqnarray*}
\rho_1=\frac{\lambda}{\mu}-\frac{\lambda}{2\mu}e^{-\mu\Delta}~,\quad\textrm{and }\quad\rho_2=\frac{\lambda}{2\mu}e^{-\mu\Delta}~,
\end{eqnarray*}
which means that the resource usage $u=\frac{\lambda}{\mu}$ is
invariant to the choice of $\Delta$.

In turn, $\Delta$ can have a major impact on the response times:
for instance, if $\mu<\lambda<2\mu$ then the response times can be
either unbounded for sufficiently large values of $\Delta$, and in
particular when $\Delta=\infty$ (i.e., no replicas are executed),
or finite for some values of $\Delta$.

In fact, an immediate application of Theorem~\ref{tm:iair} yields
that the response time is non-decreasing in $\Delta$. Thus, the
optimal choice of $\Delta$, which minimizes both the resource
usage and the response times, is $\Delta=0$. The explanation for
the seemingly sharp contrast between this theoretical result and
the empirical results from~\cite{Dean13,Jalaparti13} is the
underlying independence assumption of the replication model.

\subsubsection{Correlated Replication} A non-trivial tradeoff
between resource usage and response times manifests itself under
the more realistic correlated replication model from Section
\ref{subsec:iacr}. The original and replica response times are
modelled by
\begin{equation*}
(1-\delta)x+\delta z~\textrm{and}~(1-\delta)y+\delta z~,
\end{equation*}
where $x,y,$ and $z$ are exponential with rate $\mu=1$. The
parameter $\delta$ sets the degree of correlation; in particular,
small values of $\delta$ indicate a small degree of correlation.

Rather tedious computations of integrals, due to several
conditions stemming from the absolute value operator in $\rho_2$,
yield the individual utilizations
\begin{eqnarray*}
\rho_1&=&\frac{\lambda}{\mu}\left(1-\frac{1-\delta}{2}e^{-\frac{\mu}{1-\delta}\Delta}\right)~\textrm{and}\\
\rho_2&=&\frac{\lambda}{\mu}\left(\frac{\delta^2}{2\delta-1}e^{-\frac{\mu}{\delta}\Delta}-\frac{1-\delta}{2(2\delta-1)}e^{\frac{\mu}{1-\delta}\Delta}\right)~,
\end{eqnarray*}
and further the resource usage
\begin{equation}
u=\frac{\lambda}{\mu}\left(1+\frac{\delta^2}{2\delta-1}e^{-\frac{\mu}{\delta}\Delta}-\frac{\delta(1-\delta)}{2\delta-1}e^{-\frac{\mu}{1-\delta}\Delta}\right)
\label{eq:rhocore}
\end{equation}
under the assumptions that $\delta\in(0,1)$ and $\delta\neq .5$.

To illustrate a quantitative tradeoff between resource usage
(Eq.~(\ref{eq:rhocore})) and response times
(Theorem~\ref{tm:iacr}), we refer to Figure~\ref{fig:deltadelta}
which shows the increase of the top percentiles of the response
times ($90^{\textrm{th}}$, $95^{\textrm{th}}$, and
$99^{\textrm{th}}$) as a function of the replication offset
$\Delta$. Both small ($\delta=0.25$) and high ($\delta=0.75$)
correlation degrees are considered; in Figure~\ref{fig:rhoplot},
the resource usage $u$ corresponding to Eq.~(\ref{eq:rhocore}) is
shown. We observe that under the small correlation degree, a
$20\%$ decrease of resource usage from
$u=\frac{\lambda}{\mu}(1+\delta)\approx 0.93$ (when $\Delta=0$) to
$u=\frac{\lambda}{\mu}=0.75$ (when $\Delta=\infty$) yields a
dramatic increase of the $99^{\textrm{th}}$ percentile of the
response times of roughly $230\%$. In turn, under the high
correlation degree, the same $20\%$ decrease of resource usage
from $u\approx 1.31$ (when $\Delta=0$) to  $u\approx 1.05$ (when
$\Delta=0.8$) yields an increase of the same response time
percentile of only roughly $37\%$. These numerical results, which
are clearly dependent on the model's assumptions and numerical
values, indicate nevertheless that a drastic reduction of the top
percentiles of response times at the expense of a small increase
of resource usage~\cite{Dean13,Jalaparti13} is due to a low
correlation of the service times. Conversely, if the service times
of the replicas are sufficiently correlated, increasing the
resource usage only yields a marginal gain in response time
reductions.

\begin{figure}[h]
 \begin{center}
  \includegraphics[width=0.48\textwidth]{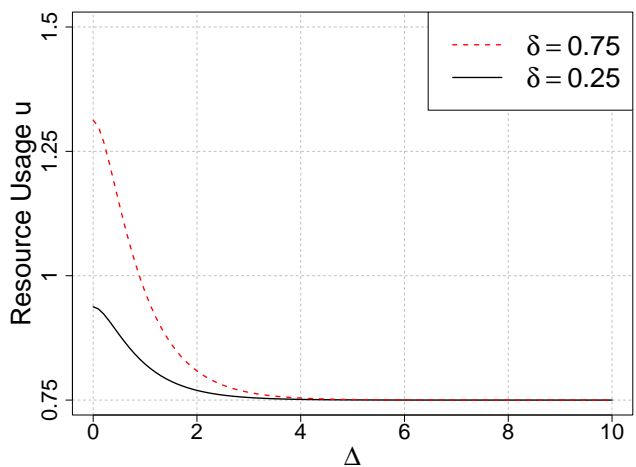}
  \caption{Resource usage $\rho$ from Eq.~(\ref{eq:rhocore}) ($\lambda=0.75$, $\mu=1$)}
  \label{fig:rhoplot}
 \end{center}
\end{figure}

\section{Conclusions}\label{sec:con}
In this paper we have developed an analytical framework to compute
stochastic bounds on the response time distribution in quite
general replicated queueing systems. Unlike existing models, ours
cover practical scenarios including correlated interarrivals,
general service time distributions, and not necessarily
independent service times for original tasks and their replicas.
By employing a powerful methodology based on martingale
transformations, we were able to derive numerically accurate
bounds by exploiting the specific correlation structures of the
underlying processes. Remarkably, we have shown both analytically
and through simulations that the choices of the underlying models
and assumptions play a fundamental role concerning the effects of
replication in parallel systems, thus motivating our general
framework. In terms of applications, we have developed a novel
task replication policy in fork-join systems which is similar to
the implementation of back-up tasks in MapReduce. For the
analytically convenient Poisson arrivals and i.i.d. exponential
service times model, our policy improves the performance of the
standard fork-join model by a fundamental logarithmic factor.

\bibliographystyle{abbrv}
\newcommand{\noopsort}[1]{}\providecommand{\noopsort}[1]{}

\end{document}